\renewcommand*{\url}[1]{\href{#1}{#1}}
\theoremstyle{plain}
\newtheorem{thm}{\protect\theoremname}
\theoremstyle{plain}
\newtheorem{lem}{\protect\lemmaname}
\theoremstyle{plain}
\theoremstyle{remark}
\newtheorem*{rem*}{\protect\remarkname}
\theoremstyle{plain}
\theoremstyle{plain}
\theoremstyle{definition}
\newtheorem{defn}{\protect\definitionname}
\theoremstyle{plain}
\theoremstyle{plain}
\newtheorem*{thm*}{\protect\theoremname}
\theoremstyle{plain}
\newtheorem*{lem*}{\protect\lemmaname}
\theoremstyle{definition}
\newtheorem{example}{\protect\examplename}
\providecommand{\propositionname}{Proposition}
\providecommand{\theoremname}{Theorem}
\providecommand{\lemmaname}{Lemma}
\providecommand{\remarkname}{Remark}
\providecommand{\conjecturename}{Conjecture}
\providecommand{\definitionname}{Definition}
\providecommand{\corollaryname}{Corollary}
\providecommand{\observationname}{Observation}
\providecommand{\examplename}{Example}
\newcommand{\trthm}{\normalfont \mathrm{tr}}
\def\bra#1{\langle{#1}\vert}
\def\ket#1{\vert{#1}\rangle}
\def\BraVert{e.g.,roup\,\mid\,\bgroup}
\def\ketbra#1#2{\vert{#1}\rangle\!\langle{#2}\vert}
\def\tr#1{\mbox{tr}\left[{#1}\right]}
\newcommand{\ptr}[2]{\mbox{tr}_{#1}\left[ #2 \right]}
\newcommand{\inp}{\normalfont \texttt{i}}
\newcommand{\out}{\normalfont \texttt{o}}
\newcommand{\Acal}{\mathcal{A}}
\newcommand{\Bcal}{\mathcal{B}}
\newcommand{\Hcal}{\mathcal{H}}
\newcommand{\Lcal}{\mathcal{L}}
\newcommand{\Kcal}{\mathcal{K}}
\begin{document}

\title{Characterising the Hierarchy of Multi-time Quantum Processes with Classical Memory}

\author{Philip Taranto}
\orcid{0000-0002-4247-3901}
\email{philipguy.taranto@phys.s.u-tokyo.ac.jp} 
\affiliation{Department of Physics, Graduate School of Science, The University of Tokyo, 7-3-1 Hongo, Bunkyo City, Tokyo 113-0033, Japan}

\author{Marco T{\'u}lio Quintino}
\orcid{0000-0003-1332-3477}
\affiliation{Sorbonne Universit{\' e}, CNRS, LIP6, F-75005 Paris, France} 

\author{Mio Murao}
\orcid{0000-0001-7861-1774}
\affiliation{Department of Physics, Graduate School of Science, The University of Tokyo, 7-3-1 Hongo, Bunkyo City, Tokyo 113-0033, Japan}

\author{Simon Milz}
\orcid{0000-0002-6987-5513}
\affiliation{School of Physics, Trinity College Dublin, Dublin 2, Ireland}
\affiliation{Trinity Quantum Alliance, Unit 16, Trinity Technology and Enterprise Centre, Pearse Street, Dublin 2, D02YN67, Ireland}

\begin{abstract}
Memory is the fundamental form of temporal complexity: when present but uncontrollable, it manifests as non-Markovian noise; conversely, if controllable, memory can be a powerful resource for information processing. Memory effects arise from/are transmitted via interactions between a system and its environment; as such, they can be either classical or quantum. From a practical standpoint, quantum processes with classical memory promise near-term applicability: they are more powerful than their memoryless counterpart, yet at the same time can be controlled over significant timeframes without being spoiled by decoherence. However, despite practical and foundational value, apart from simple two-time scenarios, the distinction between quantum and classical memory remains unexplored. Here, we analyse multi-time quantum processes with memory mechanisms that transmit only classical information forward in time. Complementing this analysis, we also study two related---but simpler to characterise---sets of processes that could also be considered to have classical memory from a structural perspective, and demonstrate that these lead to remarkably distinct phenomena in the multi-time setting. Subsequently, we systematically stratify the full hierarchy of memory effects in quantum mechanics, many levels of which collapse in the two-time setting, making our results genuinely multi-time phenomena. 
\end{abstract}

\maketitle

\section{Introduction}\label{sec::introduction}

Temporally complex memory effects appear ubiquitously across natural and engineered processes~\cite{vanKampen_2011}. Most prominently, memory can be controlled to reliably prepare states~\cite{Sayrin_2011,Magrini_2021}, simulate non-Markovian phenomena~\cite{Grimsmo_2015,Luchnikov_2019,Jorgensen_2019}, enhance information processing~\cite{Banaszek_2004,Bavaresco_2021}, and control circuit architectures~\cite{Chiribella_2008_PRL,Chiribella_2009,Mavadia_2018,White_2020}. Such primitives are routinely used in classical computers to improve performance and will be necessary for robust quantum devices~\cite{Acin_2018}. However, properly defining memory in quantum processes---a prerequisite for systematically exploiting it---has proven to be a layered question, leading to a complicated `zoo' of definitions that, in general, are incompatible with each other~\cite{Li_2018}. The main difficulty arises due to the fundamentally invasive nature of quantum measurements, which obfuscates the effects of an observer with the underlying dynamics \emph{per se} in any multi-time quantum experiment~\cite{TarantoThesis}.

Only recently has a consistent understanding of memory in quantum processes been developed via the ``process tensor'' formalism~\cite{Pollock_2018_PRL,Pollock_2018_PRA}, which has been shown to correctly generalise classical stochastic processes to the quantum realm~\cite{Milz_2020_Quantum,Strasberg_2019,Milz_2020_PRX} and encode key memory properties---length~\cite{Taranto_2019_PRL}, structure~\cite{Taranto_2019_PRA} and strength~\cite{Taranto_2021}---thereby providing a framework to analyse complex quantum dynamics ~\cite{White_2021,White_2022}. Moreover, such results have been tested experimentally~\cite{White_2020,Guo_2021}, proving their immediate applicability. Indeed, the power and flexibility of this framework is attested to by the fact that it has been repeatedly derived and applied under different guises in many contexts, the most prevalent of which being ``quantum channels with memory''~\cite{Kretschmann_2005}, ``quantum strategies''~\cite{Gutoski_2007}, ``quantum combs''~\cite{Chiribella_2008_PRL,Chiribella_2009}, ``process matrices~\cite{Oreshkov_2012}, and ``operator tensors''~\cite{Hardy_2012,Hardy_2016} (see also Refs.~\cite{Lindblad_1979,Accardi_1982,Oeckl_2003,Aharonov_2009,Cotler_2016,Portmann_2017} for related formalisms).

However, while quantum correlations in time constitute a promising resource for future quantum technologies~\cite{berk_resource_2021, berk_extracting_2021}, from a practical standpoint, control over complex \emph{quantum} memory might well be out of reach (beyond laboratory settings) in the near-term future since it requires delicate experimental setups and/or extremely short read-out times. On the other hand, manipulating quantum processes with \emph{classical} memory seems more manageable (yet still vastly more powerful than their memoryless counterparts)~\cite{Giarmatzi_2021,Nery_2021} and amenable to current quantum technologies. To date, the investigation of \emph{classical} memory in \emph{quantum} processes is still incomplete, and consequently its resourcefulness remains largely unexplored. Indeed, different experimental setups can generate distinct memory effects, and so it is crucial to distinguish between them and figure out their resourcefulness. To achieve such goals, meaningful definitions, witnesses, and physical interpretations of such processes are necessary.

Recently, an operational notion of classical memory in multi-time quantum processes has been defined~\cite{Giarmatzi_2021}; however, the full extent of its consequences in the multi-time setting remains underexplored. In the simplest scenario---the two-time setting---quantum processes with classical memory have been shown to coincide with convex mixtures of memoryless processes and are thus straightforward to analyse~\cite{Giarmatzi_2021, Nery_2021}. However, as we demonstrate, this is no longer true in the multi-time setting, where a system undergoing open dynamics (with memory) in the presence of some environment is probed at multiple times. Here, (classical) memory effects display a more intricate structure; for instance, when more than two times are considered, memory can be measured and fed forward in time to condition future dynamics. More broadly, na{\" i}vely extending intuition from the two-time setting is doomed to fail as it necessarily overlooks genuinely multi-time effects~\cite{vanKampen_1998}. 

In the multi-time setting, the question: ``what constitutes \emph{classical} memory?'' becomes ambiguous, since different, inequivalent sets of processes---each with arguably classical memory---emerge. The most natural and operationally meaningful set of quantum processes with classical memory are those that can be expressed in terms of an open quantum dynamics where the environment can only feed forward classical information. However, as we will discuss, such processes are difficult to characterise based on data that can be observed by probing the system of interest alone. Accordingly, we also consider two simpler sets of quantum processes---namely, convex mixtures of memoryless processes and those that lead to separable process tensors---which could also be considered to have ``classical memory'' (from a structural perspective) and are easier to characterise in practice, albeit at the cost of a clear physical implementation. This situation is analogous to the one considered, e.g., in entanglement theory: here, maps that can be implemented with local operations and classical communication~\textbf{(LOCC)} are a physically meaningful set, but they are difficult to characterise. Thus, for simplicity, one often considers non-equivalent but related sets of transformations, e.g., those that preserve the set of separable states~\cite{Horodecki_2009}, even though their physical implementation is unclear. 

Here, we fully characterise the naturally (due to physical or structural considerations) occurring sets of quantum processes with classical memory and prove a strict hierarchy amongst them. We demonstrate that the different types of classical memory considered can be distinguished based upon the ability of the process at hand to signal information forward in time through the environment, which---crucially---can be determined by probing the system alone. Non-signalling processes contain somewhat simple memory structures; we further show that in the multi-time setting, the relationship between classical memory, separable, and non-signalling quantum processes becomes non-hierarchical, therefore leading to a rich tapestry of potential memory effects. Our analysis complements and generalises analogous considerations in the literature for the two-time case~\cite{Giarmatzi_2021, Nery_2021}, highlighting that the multi-time scenario permits fundamentally different temporal correlations, even when only classical memory is considered. Lastly, we outline the current status regarding computational methods to characterise the sets of processes analysed.

We begin in Sec.~\ref{sec::multitimequantumprocesses} by introducing the envisaged scenario and the framework of multi-time quantum processes. Subsequently, we motivate and propose the natural definition for quantum processes with classical memory, as well as define related sets of processes that could also be considered to have classical memory in their own right (albeit with a less clear physical interpretation), stratifying the space of quantum processes with respect to the memory mechanisms. In Sec.~\ref{sec::stricthierarchy} we prove our main results, demonstrating a strict hierarchy amongst quantum processes with distinct memory structures. Following this, we analyse a number of sets of quantum processes that arise naturally in the study of memory---namely those built from classical memory, those corresponding to non-signalling processes, and those that lead to separable Choi operators---in detail, demonstrating a non-hierarchical relationship between these three sets in the multi-time setting. Lastly, in Sec.~\ref{sec::computational}, we provide methods to computationally characterise the sets of processes considered throughout this article.

\begin{figure*}[t]
\centering
\includegraphics[width=1\linewidth]{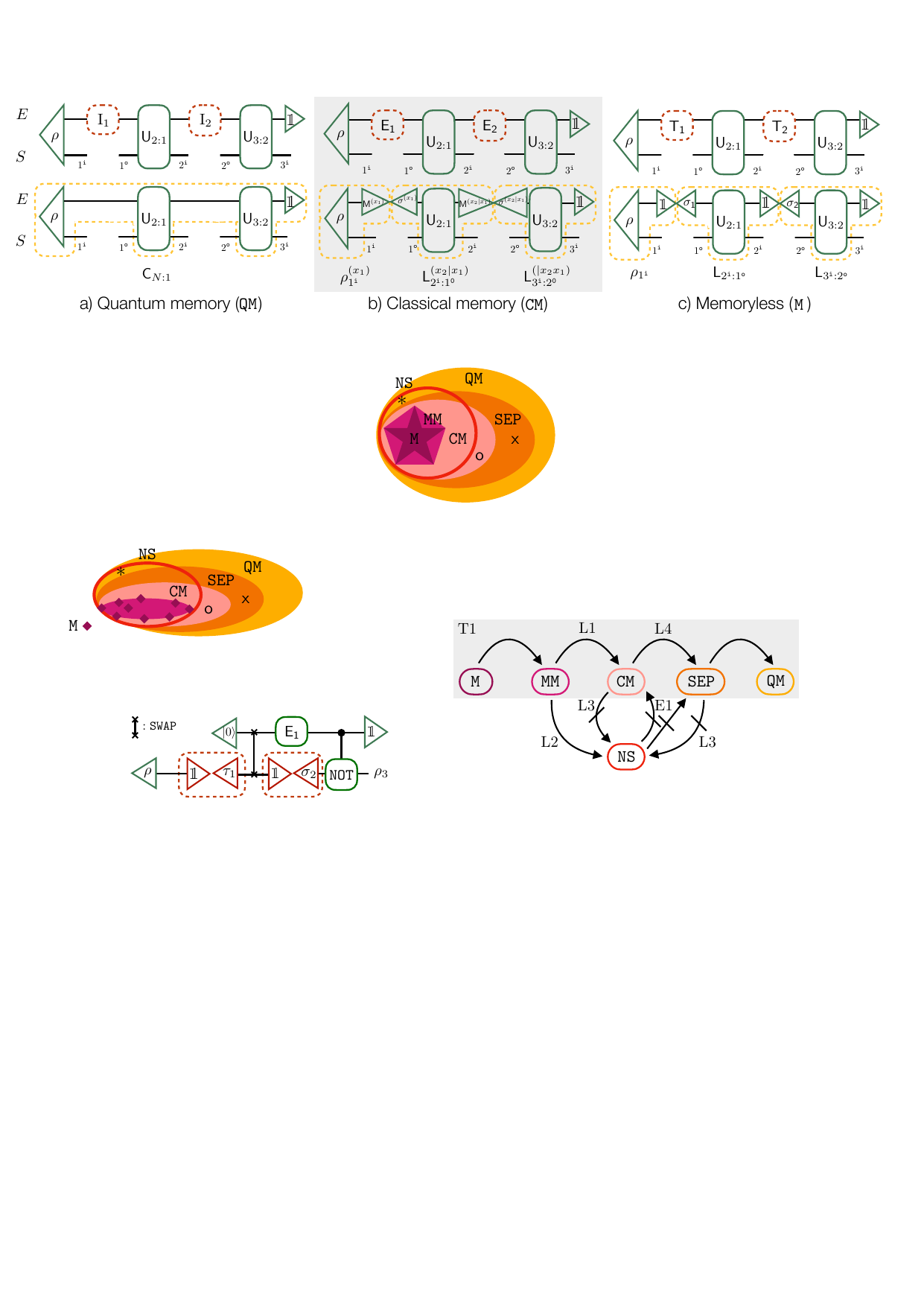}\vspace{-0.5em}
\caption{\emph{Types of Memory.} Quantum processes can have: a) quantum memory ($\texttt{QM}$, Def.~\ref{def::multitimequantumprocess}); b) classical memory ($\texttt{CM}$, Def.~\ref{def::classicalmemoryquantumprocess}); or c) no memory ($\texttt{M}$, Def.~\ref{def::memorylessquantumprocess}) (depicted for $N=3$ times). In the upper panels, we show the system-environment representation of the dynamics: in all scenarios, the system-environment evolves unitarily according to $\mathsf{U}$ between times. The distinct memory effects depend upon what happens to the environment between times: in the case of quantum memory, the environment evolves coherently, represented by the identity channel $\mathsf{I}$; in the case of classical memory, it is subject to an entanglement-breaking channel $\mathsf{E}$; in the case of no memory, it is discarded and freshly reprepared, represented by $\mathsf{T}$. Lastly, $\mathbbm{1}$ represents the tracing out over appropriate degrees of freedom. In the lower panels, we invoke the structure of the relevant environment channels to deduce the process tensor form (yellow dashed outlines). The general case cannot be broken up; the classical memory case leads to a sequence of conditional instruments as each future entanglement-breaking channel can depend upon previous outcomes [Eq.~\eqref{eq::classicalmemoryquantumprocess-2}], which is both more general than convex mixtures of memoryless processes (Def.~\ref{def::directcausequantumprocess}) and a special case of separable processes (Def.~\ref{def::separablequantumprocess}); the memoryless case leads to a sequence of independent CPTP channels [Eq.~\eqref{eq::memorylessprocess-2}].\vspace{-0.25em}}
\label{fig::typesofmemory} 
\end{figure*}

\section{Multi-time Quantum Processes}\label{sec::multitimequantumprocesses}

\subsection{Framework}\label{subsec::framework}

We consider a system $S$ that is sequentially interrogated at $N$ discrete times; in between probings, the system interacts with its environment $E$, together evolving unitarily (see Fig.~\ref{fig::typesofmemory}a). To set notation, let $\mathcal{H}$ be a finite-dimensional Hilbert space, $\mathcal{L}(\mathcal{H})$ be the set of bounded linear operators acting thereupon, and $\{\ket{i}\}_{i}$ represent the computational basis. Throughout this work, we will exclusively employ the \emph{Choi-Jamio{\l}kowski isomorphism}~\cite{Jamiolkowski_1972,Choi_1975} to represent linear maps $\mathcal{K}:\mathcal{L}(\Hcal_{\inp})\to\mathcal{L}(\Hcal_{\out})$ as matrices \mbox{$\mathsf{K}\in \Lcal(\Hcal_{\inp}\otimes\Hcal_{\out})$} via the correspondence
\begin{align}
    \mathsf{K} :=\sum_{ij} \ketbra{i}{j}\otimes \Kcal(\ketbra{i}{j}).
\end{align}
We will typically not distinguish between a linear map (in the abstract sense) and its concrete representation in terms of the Choi operator given above. However, wherever necessary to do so, we will denote the former by calligraphic script and the latter by the corresponding sans serif character. Moreover, we will find it useful to introduce the following primitives, from which we will construct complex processes. Firstly, the \emph{identity map} corresponds to
\begin{align}
    \mathrm{I}:=\sum_{ij} \ketbra{i}{j}\otimes \ketbra{i}{j},
\end{align}
and is proportional to the maximally entangled state. A \emph{unitary channel} is denoted by
\begin{align}
    \mathsf{U}:=\sum_{ij} \ketbra{i}{j}\otimes U\ketbra{i}{j}U^\dagger,
\end{align}
where $U:\Hcal_{\inp}\to\Hcal_{\out}$ is a unitary matrix. Lastly, the Choi operator of the \emph{trace map} is the identity matrix
\begin{align}
    \mathbbm{1}:=\sum_{ij} \ketbra{i}{j}\otimes \tr{\ketbra{i}{j}}=\sum_{i} \ketbra{i}{i}.
\end{align}

An object that permits the computation of all possible multi-time correlations---the ``process tensor''---can be built up in the Choi representation from these parts via the \emph{link product} $\star$~\cite{Chiribella_2009}. The link product between two Choi operators $\mathsf{A}_{12}\in \Lcal(\Hcal_1\otimes \Hcal_2)$ and $\mathsf{B}_{23}\in\Lcal(\Hcal_2\otimes \Hcal_3)$ is defined as
\begin{align}
    \mathsf{A}_{12} \star \mathsf{B}_{23} := \ptr{2}{(\mathsf{A}^{\textup{T}_2}_{12} \otimes \mathbbm{I}_3 ) \; (\mathbbm{I}_1\otimes \mathsf{B}_{23})},
\end{align}
where $\cdot^{\textup{T}_2}$ is the partial transposition on subsystem $\Hcal_2$. Intuitively, it consists of a contraction on the spaces that its constituents (here, $\mathsf{A}$ and $\mathsf{B}$) share, and a tensor product on the remaining spaces. For instance, when $\mathsf{A}$ and $\mathsf{B}$ are defined on mutually exclusive Hilbert spaces (i.e., $\Hcal_2$ is trivial), the link product reduces to the tensor product, i.e., for $\mathsf{A}_1\in \Lcal(\Hcal_1)$ and $\mathsf{B}_2\in \Lcal(\Hcal_2)$, one yields $\mathsf{A}_1\star \mathsf{B}_2=\mathsf{A}_1\otimes\mathsf{B}_2$, which is the standard way to combine independent objects in quantum theory. At the other extreme, when $\mathsf{A}$ and $\mathsf{B}$ are defined on the same Hilbert space, the link product reduces to an `inner product'-like contraction, i.e., for $\mathsf{A}\in \Lcal(\Hcal)$ and $\mathsf{B}\in \Lcal(\Hcal)$, $\mathsf{A}\star \mathsf{B}=\tr{\mathsf{A}^\textup{T}\mathsf{B}}$, which corresponds to the Born rule for appropriate objects (i.e., states and POVMs). 

More generally, when $\mathsf{A}_{12}$ and $\mathsf{B}_{23}$ represent the Choi operators of some quantum channels $\mathcal{A}:\mathcal{L}(\mathcal{H}_1)\to\mathcal{L}(\mathcal{H}_2)$ and $\mathcal{B}:\mathcal{L}(\mathcal{H}_2)\to\mathcal{L}(\mathcal{H}_3)$ respectively, the link product $\mathsf{A}_{12} \star \mathsf{B}_{23}$ corresponds to the Choi operator of the composition $\mathcal{B} \circ \mathcal{A}$. In particular, this allows one to express partial traces as, say, $\ptr{2}{\mathsf{A}_{12}} = \mathbbm{1}_{2} \star \mathsf{A}_{12}$ (since $\mathbbm{1}_{2}$ is the Choi state of the map $\text{tr}_2$) and to succinctly write the concatenation of maps as link products of Choi matrices. For example, the action of two maps $\Acal: \Lcal(\Hcal_1) \rightarrow \Lcal(\Hcal_2)$ and $\Bcal: \Lcal(\Hcal_2) \rightarrow \Lcal(\Hcal_3)$ on a state $\rho_1 \in \Lcal(\Hcal_1)$, followed by a final trace is given by $\tr{(\Bcal \circ \Acal)[\rho_1]} = \mathbbm{1}_3 \star \mathsf{B}_{23} \star \mathsf{A}_{12} \star \rho_1$. Throughout, we will exclusively employ the link product to express quantum operations and their concatenations.

An $N$-time quantum process can then be defined by concatenating global unitary dynamics with identity maps on the environment---permitting the transmission of quantum information---and a final partial trace over the environment degrees of freedom (see Fig.~\ref{fig::typesofmemory}a for an $N=3$ depiction):
\begin{defn}[Multi-time quantum process]\label{def::multitimequantumprocess}
An $N$-time \emph{quantum process} is represented by $\mathsf{C}_{N:1} \geq 0$ that can be written as
    \begin{align}\label{eq::quantummemoryprocess}
        \mathsf{C}_{N:1} &= \mathbbm{1}_{E_{N^{\inp}}} \star \Big( \bigstar_{j=1}^{N-1} \mathsf{U}_{(ES)_{j+1^{\inp} j^{\out}}}  \star \mathrm{I}_{E_{j}} \Big)\star \rho_{(ES)_{1^{\inp}}},
\end{align}
with $\rho_{(ES)_{1^{\inp}}} \in \mathcal{L}(\mathcal{H}_{E_{1^{\inp}}} \otimes \mathcal{H}_{S_{1^{\inp}}})$ the initial system-environment state; each $\mathsf{U}_{(ES)_{j+1^{\inp} j^{\out}}}$ a unitary channel [i.e., the Choi operator of the unitary map $\mathcal{U}: \mathcal{L}(\mathcal{H}_{S_{j^{\out}}}\otimes\mathcal{H}_{E_{j^{\out}}}) \to (\mathcal{H}_{S_{j+1^{\inp}}}\otimes\mathcal{H}_{E_{j+1^{\inp}}})$ corresponding to the global dynamics between times]; each $\mathrm{I}_{E_{j}} \in \mathcal{L}(\mathcal{H}_{E_{j^{\out}}}\otimes \mathcal{H}_{E_{j^{\inp}}})$ representing an identity map on the environment at each time; and
in the second line we have rewritten the final partial trace over the environment in terms of the Choi operator 
$\mathbbm{1}_{E_{N^{\inp}}} \in \mathcal{L}(\mathcal{H}_{E_{N^{\inp}}})$. We denote the set of such processes by $\mathtt{QM}$. 
\end{defn}

\noindent The Choi operator of an $N$-time process is a many-body (supernormalised) quantum state $\mathsf{C}_{N:1} \in \mathcal{L}( \mathcal{H}_{S_{N^{\inp}}} \bigotimes_{j=1}^{N-1} \mathcal{H}_{S_{j^{\out}}} \otimes \mathcal{H}_{S_{j^{\inp}}})$ defined with open slots at each time $t_1,\hdots,t_{N-1}$ and an additional final wire at $t_N$ (denoted by the subscript $N:1$). Each open slot is associated to both an input $\inp$ and output $\out$ Hilbert space, which we typically compress for conciseness. Throughout this article, we consider all processes to start on a ``slot'' and end on a final ``open line'' (see Fig.~\ref{fig::typesofmemory}). A multi-time quantum process abides by certain causality constraints that encode the impossibility of sending information from the future to the past. More precisely, a positive semidefinite operator $\mathsf{C}_{N:1} \geq 0$ represents an $N$-time process if and only if it satisfies the following \emph{causality conditions}~\cite{Chiribella_2009}
\begin{align} 
    \trthm_{N^\textup{\inp}}[\mathsf{C}_{N:1}] &= \mathbbm{1}_{{N-1}^{\out}}\otimes \mathsf{C}_{N-1:1} \nonumber \\* 
    \trthm_{N-1^\textup{\inp}}[\mathsf{C}_{N-1:1}] &=  \mathbbm{1}_{{N-2}^{\out}}\otimes\mathsf{C}_{N-2:1}\ \nonumber \\* 
  \notag  &\hspace{6pt}\vdots \nonumber \\* \label{eq::causality_conditions}
     \trthm_{1^\textup{\inp}}[\mathsf{C}_{1:1}] &= 1, 
\end{align}
where $\mathsf{C}_{k:1}\geq0$ are $k$-time quantum processes and $\mathsf{C}_{1:1}$ is a quantum state. These conditions ensure that the process \emph{cannot signal} information backwards in time. We will later use the violation of similar no-signalling conditions---albeit between arbitrary times, not just to previous ones---to distinguish between certain types of memory in quantum processes. 

Correlations of the Choi operator encode temporal correlations of the process. Recent work has demonstrated various features that arise from the interplay between memory effects and sequential measurements in quantum theory~\cite{Budini_2018, Taranto_2019_PRL, Taranto_2019_PRA, Taranto_2021, Milz_2020_PRX, Strasberg_2019, Budini_2022, Strasberg_2023, Taranto_2023}. Here, we characterise multi-time quantum processes with different types of memory, focusing on the case where the memory is restricted to being classical. 

\vspace{-0.5em}\subsection{Types of Memory in Quantum Processes}\label{subsec::typesofmemory}

We aim to delineate different types of memory, i.e., temporal correlations, similar to the hierarchies that have been developed amongst spatial correlations~\cite{Horodecki_2003, Brunner_2014}. The memory effects that can be exhibited are contingent upon what type of information the process can transmit in time. This, in turn, depends upon what happens to the \emph{environment}---the carrier of information about previous system states---during the evolution. In general, the environment evolves coherently [modelled by $\mathrm{I}_{E_{j}}$ in Eq.~\eqref{eq::quantummemoryprocess}], propagating quantum information forward. By imposing different environment dynamics, one can actively restrict the possible memory effects (see Fig.~\ref{fig::typesofmemory}).

\vspace{-0.5em}\subsubsection*{Memoryless Quantum Processes}

For instance, the environment could be erased and prepared anew in between times, thereby not transmitting \emph{any} historic information. Such \emph{refreshing} of the environment between times---which can model, e.g., rethermalisation---means that the system state at each time can be treated as a linear function of that at the immediately preceding time, which is a key assumption to deriving the much-lauded Gorini-Kossakowski-Sudarshan-Lindblad (\textbf{GKSL}) / Markovian master equations~\cite{Gorini_1976,Lindblad_1976}. Mathematically, this physical situation corresponds to a \emph{trace-and-prepare} channel being applied to the environment between times. Such channels are represented by uncorrelated Choi operators, i.e., 
\begin{align}
\mathsf{T}_{\out \inp} = \sigma_{\out} \otimes \mathbbm{1}_{\inp},
\end{align}
where $\sigma_{\out}$ is an arbitrary quantum state (which, in the picture described above, could be the thermal state of the environment). This leads to (see Fig.~\ref{fig::typesofmemory}c for an $N=3$ depiction):
\begin{defn}\label{def::memorylessquantumprocess}
    An $N$-time \emph{memoryless quantum process} is represented by a process $\mathsf{C}_{N:1}^{\textup{M}} \geq 0$ that can be written as
    \begin{align}\label{eq::memorylessprocess-1}
        \mathsf{C}_{N:1}^{\textup{M}} = \mathbbm{1}_{E_{N^{\inp}}} \star \Big(\bigstar_{j=1}^{N-1} \mathsf{U}_{(ES)_{j+1^{\inp} j^{\out}}} \star \mathsf{T}_{E_{j}}\Big) \star \rho_{(ES)_{1^{\inp}}},
    \end{align}
    with each $\mathsf{T}_{E_{j}} \in \mathcal{L}(\mathcal{H}_{E_{j^{\out}}}\otimes \mathcal{H}_{E_{j^{\inp}}})$ representing a trace-and-prepare channel on the environment at each time and the other objects as before. We denote the set of such processes by $\mathtt{M}$.
\end{defn}

\noindent By invoking the form of each trace-and-prepare channel $\mathsf{T}_{E_{j}} := \sigma_{E_{j^{\out}}} \otimes \mathbbm{1}_{E_{j^{\inp}}}$, noting that $\mathsf{L}_{j+1^{\inp}:j^{\out}} := \mathbbm{1}_{E_{j+1^{\inp}}} \star \mathsf{U}_{(ES)_{j+1^{\inp} j^{\out}}} \star \sigma_{E_{j^{\out}}}$ induces a quantum channel on the system, and defining $\rho_{1^{\inp}} := \rho_{(ES)_{1^{\inp}}} \star \mathbbm{1}_{E_{1^{\inp}}}$ (see Fig.~\ref{fig::typesofmemory}c), we have the equivalent form of a memoryless quantum process as a sequence of \emph{independent} quantum channels~\cite{Pollock_2018_PRL,Costa_2016}:
\begin{align}\label{eq::memorylessprocess-2}
    \mathsf{C}_{N:1}^{\textup{M}} = \mathsf{L}_{N^{\inp}:N-1^{\out}} \otimes \hdots \otimes \mathsf{L}_{2^{\inp}:1^{\out}} \otimes \rho_{1^{\inp}}.
\end{align}
Here, each $\mathsf{L}_{j+1^{\inp}:j^{\out}} \in \mathcal{L}(\mathcal{H}_{S_{j+1^{\inp}}}\otimes \mathcal{H}_{S_{j^{\out}}})$ represents a \emph{completely positive and trace preserving} \textbf{(CPTP)} channel evolving the system from $t_{j}$ to $t_{j+1}$; in the Choi representation, these properties are respectively reflected by $\mathsf{L}_{j+1^{\inp}:j^{\out}} \geq 0$ and $\ptr{j+1^{\inp}}{\mathsf{L}_{j+1^{\inp}:j^{\out}}} = \mathbbm{1}_{d_{j^{\out}}}$. Memorylessness of the process is clear from this uncorrelated structure, as all of the CPTP maps are mutually independent, ensuring that the environment itself plays no role in transmitting forward historic information.

\vspace{-0.5em}\subsubsection*{Classical Memory Quantum Processes}

In more general scenarios, e.g., when the environment is \emph{not} fully refreshed between times, memory can be transported. Depending on the particularities of the environment and its interactions with the system, said memory can be either classical or quantum. Here, we model classical memory effects by interpolating between the memoryless case (where the environment is refreshed between times) and the quantum memory case (where the environment coherently propagates) by interspersing the global dynamics with channels that transmit only classical information through the environment. This scenario can be modelled by subjecting the environment to an \emph{entanglement-breaking channel} \textbf{(EBC)} between times. In each run, any EBC can measure the system and feed forward classical information pertaining to the outcome. Mathematically, EBCs are described by measure-and-prepare channels, i.e., 
\begin{align}
\mathsf{E}_{\out \inp} = \sum_x \sigma_{\out}^{(x)} \otimes \mathsf{M}_{\inp}^{(x)}
\end{align}
where each $\sigma_{\out}^{(x)}$ is an arbitrary state and $\{ \mathsf{M}_{\inp}^{(x)} \}$ forms a positive operator-valued measure (\textbf{POVM}), with the label $x$ indicating a potential outcome~\cite{Horodecki_2003}. This leads to (see Fig.~\ref{fig::typesofmemory}b for an $N=3$ depiction):
\begin{defn}\label{def::classicalmemoryquantumprocess}
An $N$-time \emph{classical memory quantum process} is represented by a process $\mathsf{C}^{\textup{CM}}_{N:1} \geq 0$ that can be written as
    \begin{align}\label{eq::classicalmemoryquantumprocess-1}
        \mathsf{C}_{N:1}^{\textup{CM}} = \mathbbm{1}_{E_{N^{\inp}}} \star \Big( \bigstar_{j=1}^{N-1} \mathsf{U}_{(ES)_{j+1^{\inp} j^{\out}}} \star \mathsf{E}_{E_{j}} \Big) \star \rho_{(ES)_{1^{\inp}}},
    \end{align}
with each $\mathsf{E}_{E_{j}} \in \mathcal{L}(\mathcal{H}_{E_{j^{\out}}}\otimes \mathcal{H}_{E_{j^{\inp}}})$ representing an EBC on the environment at each time and the other objects as before. We denote the set of such processes by $\mathtt{CM}$.
\end{defn}

\noindent By explicitly invoking the structure of EBCs, we yield the form of Choi operators considered in Ref.~\cite{Giarmatzi_2021} (see App.~\ref{app::classicalmemoryquantumprocesses}):
\begin{align}\label{eq::classicalmemoryquantumprocess-2}
    \mathsf{C}_{N:1}^{\textup{CM}}\!=\!  \sum_{x_{N:1}} \mathsf{L}_{N^{\inp}:N-1^{\out}}^{(x_N|x_{N-1:1})} \otimes \hdots \otimes \mathsf{L}_{2^{\inp}:1^{\out}}^{(x_2|x_1)} \otimes \rho_{1^{\inp}}^{(x_1)},
\end{align}
where $x_{k:j} := \{ x_j,\hdots, x_k\}$, $\{ \rho_{1^{\inp}}^{(x_1)} \}$ forms a state ensemble, i.e., each $\rho_{1^{\inp}}^{(x_1)} \geq 0$ and $\rho_{1^{\inp}} := \sum_{x_1} \rho_{1^{\inp}}^{(x_1)}$ has unit trace, and \mbox{$\mathsf{L}_{j+1^{\inp}:j^{\out}}^{(x_{j+1}|x_{j:1})} := \mathsf{M}_{E_{j+1^{\inp}}}^{(x_{j+1}|x_{j:1})} \star \mathsf{U}_{(ES)_{j+1^{\inp} j^{\out}}} \star \sigma_{E_{j^{\out}}}^{(x_j|x_{j-1:1})}$} forms an instrument on the space of the system for each conditioning argument, reflecting the fact that every outcome observed can condition the choice of future EBCs (see Fig.~\ref{fig::typesofmemory}b). The intuition behind such a process is as follows. First, an EBC $\mathsf{E}_{E_1} := \sum_{x_1}\sigma^{(x_1)}_{E_{1^{\out}}} \otimes \mathsf{M}_{E_{1^{\inp}}}^{(x_1)}$ is enacted on the environment. The classical label corresponding to any observed outcome $x_1$ can be stored and fed forward to condition the overall choice of any future EBC. For instance, the second EBC can be of the form $\mathsf{E}_{E_2}^{(|x_1)} := \sum_{x_2}\sigma^{(x_2|x_1)}_{E_{2^{\out}}}\otimes \mathsf{M}_{E_{2^{\inp}}}^{(x_2|x_1)} $, which is a proper channel for each value of the conditioning argument $x_1$. In other words, for each previously recorded $x_1$, some later EBC is enacted with certainty. More generally, at any time $t_j$, the EBC to which the environment is subjected can depend upon the entire history, i.e., $\mathsf{E}_{E_j}^{(|x_{j-1:1})} := \sum_{x_j} \sigma^{(x_j|x_{j-1:1})}_{E_{j^{\out}}} \otimes \mathsf{M}_{E_{j^{\inp}}}^{(x_j|x_{j-1:1})}$, which leads to the form of \emph{conditional instruments}.

The above definition corresponds to a clear physical picture in terms of the system-environment dynamics. However, as we will discuss later, it is difficult to gauge whether a given process can be constructed by classical memory based upon observations of the system alone. As such, we will also consider two additional sets of processes that are simpler to characterise and could also be considered to have classical memory in their own right.

\vspace{-0.5em}\subsubsection*{Mixed Memoryless Quantum Processes}

The simplest such scenario is to consider classical mixtures (i.e., convex combinations) of memoryless quantum process. This set of processes appeared previously in the literature in the context of inferring causal structures under the guise of \textit{direct cause processes}~\cite{Reid_15,Feix_2017,Nery_2021}.

\begin{defn}\label{def::directcausequantumprocess}
An $N$-time \emph{mixed memoryless quantum process} is represented by a process $\mathsf{C}_{N:1}^{\textup{MM}} \geq 0$ that can be written as
\begin{align}
    \mathsf{C}_{N:1}^{\textup{MM}} = \sum_x p_x \mathsf{L}_{N^{\inp}:N-1^{\out}}^{(x)} \otimes \hdots \otimes \mathsf{L}_{2^{\inp}:1^{\out}}^{(x)} \otimes \rho_{1^{\inp}}^{(x)},
    \label{eqn::MM_proc}
\end{align}
where $p_x$ is a probability distribution and each $\mathsf{L}_{j+1^{\inp}:j^{\out}}^{(x)}\in \mathcal{L}(\mathcal{H}_{S_{j+1^{\inp}}}\otimes \mathcal{H}_{S_{j^{\out}}})$ is a CPTP channel on the system for each $x$. We denote the set of such processes by $\mathtt{MM}$.
\end{defn}

\noindent Intuitively speaking, the above definition corresponds to processes where, initially, a classical coin is flipped, and depending on the outcome $x$ (which occurs with probability $p_x$), a memoryless process is chosen. Importantly, though, such a mixed memoryless process is \emph{not} memoryless overall, as temporal correlations can be mediated via dependence on the classical label $x$ (which is, nonetheless, the only form of memory present in such a process, and therefore engenders a form of classical memory). From the outset, it seems that Def.~\ref{def::directcausequantumprocess} is a special case of Def.~\ref{def::classicalmemoryquantumprocess}; however, this is not clear for a number of reasons. Firstly, in classical probability theory, there is no distinction between processes with memory and direct cause ones, since any probability distribution $\mathbbm{P}(x_N,\hdots,x_1)$ (i.e., any classical stochastic process) can be expressed as $\sum_\lambda p_\lambda \mathbbm{P}^{(\lambda)}(x_N|x_{N-1}) \hdots \mathbbm{P}^{(\lambda)}(x_{2}|x_{1}) \mathbbm{P}^{(\lambda)}(x_1)$, where $\lambda$ acts as a hidden variable that chooses which Markovian process occurs. This follows by defining a vector $\lambda$ whose entries $\lambda_k$ take the same values as $x_k$ and setting $\mathbbm{P}^{(\lambda)}(x_k|x_{k-1}) := \delta_{\lambda_k x_k}$ and $p_{\lambda = (x_N,\hdots,x_1)} := \mathbbm{P}(x_N,\hdots,x_1)$. Thus, the classical analogues of Defs.~\ref{def::classicalmemoryquantumprocess} and~\ref{def::directcausequantumprocess} are equivalent, so it may be surprising that they differ in the quantum realm.

Secondly, if one only considers quantum processes on two times, classical memory quantum processes are equivalent to mixed memoryless quantum processes, as we demonstrate in App.~\ref{app::twotimeequivalence}.\footnote{This equivalence has been shown in Refs.~\cite{Giarmatzi_2021,Nery_2021}, but we include it here for completeness.} These two points notwithstanding, in the coming section we show that when one considers more times, direct cause classical memory quantum processes form a strict subset of classical memory ones, highlighting a genuinely multi-time phenomenon---in other words, behaviour that cannot be seen by addressing two times alone. Before doing so, however, we consider a distinct set of processes that could also bear the moniker ``classical memory''.

\vspace{-0.5em}\subsubsection*{Separable Quantum Processes}

Classical memory quantum processes have been explored in Ref.~\cite{Giarmatzi_2021}. Since mathematically characterising the set $\mathtt{CM}$ is difficult, the authors relaxed the condition that each conditional dynamics must be trace preserving, leading (not obviously) to a superset of $\mathtt{CM}$~\cite{Nery_2021}. Precisely, they dropped the demand that each $\mathsf{L}_{j^{\inp}:j-1^{\out}}^{(x_{j-1:1})} := \sum_{x_j} \mathsf{L}_{j^{\inp}:j-1^{\out}}^{(x_j|x_{j-1:1})}$ is TP [see Eq.~\eqref{eq::classicalmemoryquantumprocess-2}], only retaining positivity of each element and overall causality. This leads to:
\begin{defn}\label{def::separablequantumprocess}
An $N$-time \emph{separable quantum process} is represented by a process $\mathsf{C}_{N:1}^{\textup{SEP}} \geq 0$ that can be written as~\cite{Giarmatzi_2021,Nery_2021}
\begin{align}\label{eq::separablequantumprocess}
    \mathsf{C}_{N:1}^{\textup{SEP}} = \sum_x p_x \mathsf{L}_{N^{\inp}:N-1^{\out}}^{(x)} \otimes \hdots \otimes \mathsf{L}_{2^{\inp}:1^{\out}}^{(x)} \otimes \rho_{1^{\inp}}^{(x)},
\end{align}
where $p_x$ is a probability distribution, each \mbox{$\rho_{1^{\inp}}^{(x)} \geq 0$} and each $\mathsf{L}_{j+1^{\inp}:j^{\out}}^{(x)} \geq 0$. We denote the set of such processes by $\mathtt{SEP}$.
\end{defn}

\noindent Note the lack of constraints on the elements in this decomposition compared to Defs.~\ref{def::classicalmemoryquantumprocess} and \ref{def::directcausequantumprocess}, where they must individually represent conditional instruments or CPTP channels, respectively. Since the resulting processes are separable with respect to different points in time, their memory could also meaningfully be considered ``classical'', albeit on mathematical (i.e., the structure of $\mathsf{C}_{N:1}^{\textup{SEP}}$) rather than dynamical (i.e., the mechanism by which memory is transported) grounds. This moniker is justified since no entanglement can be shared across times, and therefore all observable temporal correlations can be explained via the (only classically correlated) separable process at hand, which plays a role analogous to a local hidden state. Separable quantum processes have been previously introduced in Refs.~\cite{Giarmatzi_2021,Nery_2021} from an abstract perspective in order to outer approximate the set of classical memory processes; for the two-time case, in Sec.~\ref{subsec::separable}, we provide a concrete circuit representation of such processes in terms of separable channels, which have been studied throughout the literature~\cite{Rains_1997,Vedral_1997,Bennett_1999}; however, in the multi-time setting, such a representation remains missing.

\emph{A priori}, it is unclear if these sets of processes (Defs.~\ref{def::classicalmemoryquantumprocess}--\ref{def::separablequantumprocess})---each representing a distinct physical and/or structural situation that could be considered to represent \textit{classical} memory---coincide on the level of what is observable on the system; we now demonstrate a clear distinction between each of them.\vspace{-0.5em}

\section{Strict Hierarchy of Memory Effects for Multi-Time Quantum Processes}\label{sec::stricthierarchy}

By imposing certain dynamics on the environment one can ``kill off'' certain types of memory---be it erasing the history (Def.~\ref{def::memorylessquantumprocess}) or destroying coherent information (Def.~\ref{def::classicalmemoryquantumprocess})---to yield distinct memory classes (note, though, that the physical picture corresponding to $\mathtt{SEP}$ is unclear). However, from a practical perspective, one cannot access the environment; thus, we now develop methods to distinguish these sets by probing the system alone, demonstrating the strict hierarchy (see Fig.~\ref{fig::hierarchy}):
\begin{thm}\label{thm::hierarchy}
For $N\geq3$ times, we have the \emph{strict} set inclusions:
    \begin{gather}
    \mathtt{M} \subsetneq \mathtt{MM} \subsetneq \mathtt{CM} \subsetneq \mathtt{SEP} \subsetneq \mathtt{QM}.
\end{gather}
\end{thm}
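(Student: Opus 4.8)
The plan is to settle the four non-strict inclusions first (these follow almost entirely from the definitions and their reformulations), and then to establish strictness by exhibiting, for each link, an explicit separating example at $N=3$ together with a witness certifying non-membership in the smaller set.

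For the inclusions: $\mathtt{M} \subseteq \mathtt{MM}$ is immediate, since a memoryless process Eq.~\eqref{eq::memorylessprocess-2} is the single-term case $p_x = \delta_{x,x_0}$ of Eq.~\eqref{eqn::MM_proc}. For $\mathtt{MM} \subseteq \mathtt{CM}$ I would use the hidden-variable trick already flagged in the text: given $\sum_x p_x \mathsf{L}^{(x)}_{N^{\inp}:N-1^{\out}} \otimes \hdots \otimes \rho^{(x)}_{1^{\inp}}$, promote $x$ to a conserved classical label by setting the ensemble element to $p_{x_1}\rho^{(x_1)}_{1^{\inp}}$ and taking each conditional instrument element to be $\mathsf{L}^{(x_{j+1}|x_{j:1})}_{j+1^{\inp}:j^{\out}} := \delta_{x_{j+1},x_j}\, \mathsf{L}^{(x_j)}_{j+1^{\inp}:j^{\out}}$; summing over $x_{j+1}$ returns the CPTP map $\mathsf{L}^{(x_j)}_{j+1^{\inp}:j^{\out}}$, so each conditional map is trace-preserving and the object has the form Eq.~\eqref{eq::classicalmemoryquantumprocess-2}. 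The inclusion $\mathtt{CM} \subseteq \mathtt{SEP}$ holds by relabelling the whole trajectory $x_{N:1}$ as a single index and noting that Eq.~\eqref{eq::classicalmemoryquantumprocess-2} is manifestly a convex combination of tensor products of positive operators, i.e. of the form Eq.~\eqref{eq::separablequantumprocess} with the instrument and trace-preservation constraints merely dropped. Finally, $\mathtt{SEP} \subseteq \mathtt{QM}$ follows because any causal positive operator admits a unitary dilation of the form Eq.~\eqref{eq::quantummemoryprocess} (the standard comb/process dilation), so every valid separable process is in particular a quantum process.

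For strictness I would build one explicit $N=3$ example per inclusion and certify it with a witness. For $\mathtt{SEP} \subsetneq \mathtt{QM}$, take a process that coherently stores the system in the environment and retrieves it later, so that its Choi operator carries genuine entanglement across a time cut; non-separability is certified by a positive-partial-transpose violation or a tailored entanglement witness. For $\mathtt{CM} \subsetneq \mathtt{SEP}$, take a separable process whose decomposition forces non-trace-preserving conditional pieces that cannot be reassembled into valid instruments, the obstruction being that the trace-preservation constraints defining $\mathtt{CM}$ carve out a strictly smaller convex set. For $\mathtt{MM} \subsetneq \mathtt{CM}$---the genuinely multi-time link---take a \emph{measure-and-feed-forward} process in which the outcome of an intermediate entanglement-breaking channel actively conditions the later dynamics; this produces forward signalling from the middle time that cannot be attributed to a common cause fixed at $t_1$, whereas in an $\mathtt{MM}$ process the later channels depend only on the pre-selected label $x$ and never on what is done at intermediate slots. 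For $\mathtt{M} \subsetneq \mathtt{MM}$, any nontrivial mixture of two distinct product processes already carries temporal correlations and so fails to factorise as in Eq.~\eqref{eq::memorylessprocess-2}.

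The main obstacle is the lower bound in the $\mathtt{CM} \subsetneq \mathtt{SEP}$ and especially the $\mathtt{MM} \subsetneq \mathtt{CM}$ steps: certifying that the constructed process lies \emph{outside} the convex sets $\mathtt{CM}$ and $\mathtt{MM}$. I would address this either by casting membership in each set as a conic feasibility problem and exhibiting an explicit dual separating-hyperplane witness, or---more in keeping with the signalling narrative of the paper---by constructing a linear functional built from forward-signalling statistics that provably vanishes (or is bounded) on the smaller set yet is strictly violated by the example. The role of $N \geq 3$ enters exactly here: at $N = 2$ there is no intermediate time from which to feed an outcome forward, $\mathtt{MM}$ and $\mathtt{CM}$ collapse (cf.\ App.~\ref{app::twotimeequivalence}), and the separating functionals have no support, so strictness of these links is a genuinely multi-time phenomenon.
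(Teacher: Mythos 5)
Your four inclusions are sound: the $\delta$-relabelling that promotes the mixing variable $x$ to a conserved classical label is exactly the hidden-variable construction the paper gestures at in its classical-probability discussion, and $\mathtt{CM} \subseteq \mathtt{SEP}$ by dropping constraints is the paper's argument verbatim. Your plan for $\mathtt{MM} \subsetneq \mathtt{CM}$ is also essentially the paper's route: Lem.~\ref{lem::MMnosignalling} is precisely your ``linear functional built from forward-signalling statistics that provably vanishes on the smaller set'' (trace-preservation of each $\mathsf{L}^{(x)}$ forces every $\mathtt{MM}$ process to satisfy the linear no-signalling conditions of Def.~\ref{def::nsquantumprocess}), and Lem.~\ref{lem::cmbutsignalling} supplies the violating feed-forward example (swap with a fiducial environment state, computational-basis EBC, then a $\mathtt{CNOT}$ controlled on the environment, which signals from $t_1$ to $t_3$ even when the input at $t_2$ is traced out). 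One small repair at the bottom of the chain: a nontrivial mixture of two distinct product processes is \emph{not} automatically correlated --- if the constituents differ in only one tensor factor, the mixture is again a product and hence memoryless --- so you must choose a mixture in which, say, channel and initial state are genuinely correlated, e.g.\ $\tfrac{1}{2}\left(\mathsf{L}^{(0)} \otimes \rho^{(0)} + \mathsf{L}^{(1)} \otimes \rho^{(1)}\right)$ with both factors differing.

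The genuine gap is the strictness of $\mathtt{CM} \subsetneq \mathtt{SEP}$, where both of your fallback strategies fail. A signalling-based functional cannot work on this cut: $\mathtt{CM}$ processes themselves signal arbitrarily far forward (that is the very property you exploit against $\mathtt{MM}$), and indeed the paper notes $\mathtt{SEP} \not\subseteq \mathtt{NS}$ \emph{because} $\mathtt{CM} \not\subseteq \mathtt{NS}$, so no linear constraint of no-signalling type distinguishes the two sets. Your alternative --- an explicit dual separating hyperplane for the conic membership problem --- presupposes a tractable characterisation of the convex set $\mathtt{CM}$, which the paper explicitly flags as an open problem in Sec.~\ref{sec::computational}; your sketch (``a separable process whose decomposition forces non-trace-preserving conditional pieces'') merely restates the claim to be proven. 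The paper does not construct this separation itself: it imports Lem.~\ref{lem::nery} from Ref.~\cite{Nery_2021}, where the specific two-time process of Eq.~\eqref{eq::guerin} is shown by a bespoke argument to be separable yet admit no classical-memory realisation, and then bootstraps to $N \geq 3$ by tensoring with arbitrary (or trivial) processes on the extra times. This also corrects your closing remark: since that example lives at $N=2$ (where $\mathtt{MM} = \mathtt{CM}$), the $\mathtt{CM} \subsetneq \mathtt{SEP}$ gap is \emph{not} a multi-time phenomenon and your claim that its separating functional ``has no support'' at $N=2$ is wrong; only the $\mathtt{MM} \subsetneq \mathtt{CM}$ link genuinely requires $N \geq 3$.
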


\begin{figure}[t]
\centering
\includegraphics[width=0.6\linewidth]{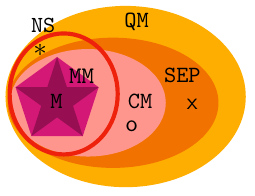}
\caption{\emph{Stratification of Memory Structures.} Here we illustrate the strict hierarchy demonstrated in Thm.~\ref{thm::hierarchy}, as well as the non-hierarchical relationship between $\mathtt{CM}$, $\mathtt{SEP}$ and $\mathtt{NS}$. In Lem.~\ref{lem::cmbutsignalling}, we present a process in $\mathtt{CM}$ but outside $\mathtt{MM}$ ($\mathsf{o}$). For two-time processes, $\mathtt{CM}$ and $\mathtt{MM}$ coincide (see App.~\ref{app::twotimeequivalence}), making our demonstration here one of a genuinely multi-time phenomenon. Furthermore, a process that is in $\mathtt{SEP}$ but outside $\mathtt{CM}$ ($\mathsf{x}$) was demonstrated in Ref.~\cite{Nery_2021}. Lastly, in Ex.~\ref{ex::entangledns} we demonstrate a process that is in $\mathtt{NS}$ but outside of $\mathtt{SEP}$ ($\mathsf{*}$).}
\label{fig::hierarchy} 
\end{figure}

\noindent Firstly, it is straightforward to note that $\mathtt{M} \subsetneq \mathtt{MM}$ due to the non-convexity of the set of memoryless processes. Similarly, $\mathtt{SEP} \subsetneq \mathtt{QM}$ holds since not all quantum processes are separable in time~\cite{Giarmatzi_2021,Milz_2021}. However, the relationship between the middle three sets---each of which positing a meaningful notion of ``classical memory quantum processes''---is more intricate. Interestingly, for processes defined on only two times, $\mathtt{MM}$ and $\mathtt{CM}$ coincide~\cite{Nery_2021} (see App.~\ref{app::twotimeequivalence}). However, on more than two times, the sets differ. Demonstrating this gap requires constructing a classical memory quantum process, i.e., one of the form given in Def.~\ref{def::classicalmemoryquantumprocess}, that is not a mixed memory quantum process, i.e., \emph{cannot} be rewritten in the form of Def.~\ref{def::directcausequantumprocess}. Analogously to the case of finding whether or not a given state has a separable decomposition, this task is difficult in general; even more so here as both sets that we wish to distinguish correspond to separable Choi states, with the difference being a constraint on each element of the decomposition individually (i.e., the additional trace preservation of each channel in the decomposition required in the latter case). We now move to develop a simple criteria that can serve as a \emph{witness} to the impossibility of a mixed memory realisation, thereby allowing us to prove said distinction.

\begin{table*}
    \centering
    \begin{tabular}{lll}
        \textbf{Name} & \textbf{Definition} & \textbf{Constraint} \\ \hline\hline
        Memoryless $\texttt{M}$ & Def.~\ref{def::memorylessquantumprocess} & Tensor product of CPTP channels [see Eq.~\eqref{eq::memorylessprocess-2}] \\ \hline
        Mixed Memory $\texttt{MM}$ & Def.~\ref{def::directcausequantumprocess} & Convex combination of memoryless processes [see Eq.~\eqref{eqn::MM_proc}] \\ \hline
        Classical Memory $\texttt{CM}$ & Def.~\ref{def::classicalmemoryquantumprocess} & Composition of conditional instruments [see Eq.~\eqref{eq::classicalmemoryquantumprocess-2}] \\ \hline
        Separable $\texttt{SEP}$ & Def.~\ref{def::separablequantumprocess} & Separable Choi operator [see Eq.~\eqref{eq::separablequantumprocess}] \\ \hline
        Non-Signalling $\texttt{NS}$ & Def.~\ref{def::nsquantumprocess} & No signalling forwards in time [see Eq.~\eqref{eq::nsquantumprocess}] \\ \hline
        Quantum Memory $\texttt{QM}$ & Def.~\ref{def::multitimequantumprocess} & General quantum process [see Eq.~\eqref{eq::causality_conditions}] 
    \end{tabular}
\caption{\emph{Classes of Quantum Processes.} Here, we summarise the six different types of quantum processes that we consider throughout this article. The formal definitions are provided in the middle column, with the rightmost column conveying an intuitive description of the relevant constraints on the Choi state of the process. Furthermore, the underlying physical realisation of processes in the sets $\mathsf{QM}$, $\mathsf{CM}$, and $\mathsf{M}$ are depicted in Fig.~\ref{fig::typesofmemory}; processes in the set $\mathsf{MM}$ can be implemented by randomly sampling memoryless processes; and finally, processes in the sets $\mathsf{SEP}$ and $\mathsf{NS}$ do not have a clear underlying physical realisation, but admit an explicit characterisation via their Choi states.}
    \label{tab::processes}
\end{table*}

As it turns out, many of the steps in the proof of Thm.~\ref{thm::hierarchy} can be simplified by analysing the \emph{signalling} properties of the considered processes. This motivates the definition of the set of non-signalling ($\mathtt{NS}$) processes: 
\begin{defn}
\label{def::nsquantumprocess}
An $N$-time \emph{non-signalling quantum process} is represented by a process $\mathsf{C}_{N:1}^{\textup{NS}} \geq 0$ that satisfies
\begin{align}\label{eq::nsquantumprocess}
        \trthm_{k^\textup{\inp}}[\mathsf{C}_{N:1}^{\textup{NS}}] = \frac{\mathbbm{1}_{{k-1}^{\out}}}{d_{{k-1}^{\out}}} \otimes \trthm_{k^\textup{\inp}{k-1}^\textup{\out}}\!\left[\mathsf{C}_{N:1}^{\textup{NS}}\right]\, ,
\end{align}
$\forall \, k$. We denote the set of such processes by $\mathtt{NS}$.
\end{defn}

\noindent Having introduced all of the classes of processes relevant for our analysis, we provide a summary in Tab.~\ref{tab::processes} to aid in understanding their differences. Besides being an interesting set of processes in their own right, $\mathtt{NS}$ processes provide a simple and intuitive way to understand the strict inclusions of Thm.~\ref{thm::hierarchy}. Consequently, we  will use this set of non-signalling processes as an auxiliary set that aids in the proof of Thm.~\ref{thm::hierarchy}, as depicted in Fig.~\ref{fig::proofrelations}. Importantly, such processes do not allow for signalling between non-adjacent times by preparing different states at any time $t_j$. More concretely, as soon as the system is discarded at a time $t_k$---corresponding to the operation $\trthm_{k^\textup{\inp}}$ on the l.h.s. of Eq.~\eqref{eq::nsquantumprocess}---the potential influence of any state that was fed in at the preceding time $t_{k-1}$ is discarded as well---corresponding to the identity matrix on space associated to time $k-1^{\out}$ on the r.h.s. of Eq.~\eqref{eq::nsquantumprocess}. As a consequence, in such a process it is impossible to send signals forward beyond the subsequent time by preparing different system states. 

This property is analogous to the special case of multi-partite channels where any of the involved parties $A, B, C, ...$ can \emph{only} send signals to its respective output $A', B', C', ...$ but not to any other party (see Sec.~\ref{subsec::nonsignalling} for a more detailed discussion of the non-signalling property). Throughout the literature, such processes have been discussed under the guise of causal/non-signalling quantum \emph{channels} $\Lambda_\mathtt{NS}: \Lcal(\bigotimes_{j=1}^{N-1} \mathcal{H}_{S_{j^{\out}}}) \rightarrow \Lcal(\bigotimes_{k=1}^{N} \mathcal{H}_{S_{k^{\inp}}})$~\cite{beckman_causal_2001, piani_properties_2006}. Here, instead, we consider them as multi-time quantum processes, which does not change their mathematical representation (in terms of Choi operators), but makes the interpretation of their signalling properties somewhat more subtle (see discussion in Sec.~\ref{subsec::nonsignalling}). We emphasise that every non-signalling quantum channel $\Lambda_\mathtt{NS}$ can indeed be understood as a multi-time quantum process, since---as can be seen by direct comparison---satisfaction of the process non-signalling constraints~\eqref{eq::nsquantumprocess} (which correspond to appropriate similar conditions for any non-signalling channel) automatically implies that the causality conditions of Eq.~\eqref{eq::causality_conditions} are obeyed.

Both memoryless and $\mathtt{MM}$ processes lie within $\mathtt{NS}$ (see Lem.~\ref{lem::MMnosignalling}). This is trivially true for $N=2$ times, in which case we have $\mathtt{NS} = \mathtt{QM}$ and the strict hierarchy of Thm.~\ref{thm::hierarchy} breaks down (see App.~\ref{app::twotimeequivalence}). On the other hand, for $N\geq 3$, the situation presents itself more differentiated and we have strict inclusions of the different sets of memory processes that can be proven via their relation to $\mathtt{NS}$. To this end, we first have:

\begin{lem}\label{lem::DCsubsetofCM}
For $N\geq3$ times, mixed memoryless quantum processes are a strict subset of classical memory quantum processes, \textup{i.e.}, \mbox{$\mathtt{MM} \subsetneq \mathtt{CM}$}.
\end{lem}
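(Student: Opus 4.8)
The plan is to establish the two halves of the claim separately: first the containment $\mathtt{MM} \subseteq \mathtt{CM}$, and then strictness by exhibiting a process in $\mathtt{CM} \setminus \mathtt{MM}$. The containment should be easy and explicit; the strictness is where the real work lies, and I would route it through the non-signalling set $\mathtt{NS}$ of Def.~\ref{def::nsquantumprocess}.

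For the containment, I would show that any process of the form in Eq.~\eqref{eqn::MM_proc} can be rewritten in the conditional-instrument form of Eq.~\eqref{eq::classicalmemoryquantumprocess-2} by using the classical memory merely to transmit the fixed mixing label $x$ forward in time. Concretely, I take the initial ensemble element for $x_1 = x$ to be $p_{x}\, \rho_{1^{\inp}}^{(x)}$ (a valid ensemble, since $\sum_{x} p_{x}\,\rho_{1^{\inp}}^{(x)}$ has unit trace and each element is positive), and I choose each conditional instrument to deterministically copy the incoming label and apply the corresponding channel, i.e.\ its only nonzero elements are those with $x_{j+1}=x_{j}=\dots=x_{1}$, for which $\mathsf{L}_{j+1^{\inp}:j^{\out}}^{(x_{j+1}|x_{j:1})} = \mathsf{L}_{j+1^{\inp}:j^{\out}}^{(x_{1})}$. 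Summing over $x_{j+1}$ returns the single CPTP channel $\mathsf{L}_{j+1^{\inp}:j^{\out}}^{(x_{1})}$, so each is a legitimate instrument, and performing the full sum over $x_{N:1}$ collapses onto the diagonal and reproduces Eq.~\eqref{eqn::MM_proc} exactly. This makes transparent that mixed memory is precisely the special case of classical memory in which the fed-forward classical variable is set once and is never updated by intermediate measurement outcomes.

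For strictness I would use the fact that $\mathtt{MM}\subseteq\mathtt{NS}$ (Lem.~\ref{lem::MMnosignalling}): for an $\mathtt{MM}$ process, tracing out $k^{\inp}$ acts only on the factor $\mathsf{L}_{k^{\inp}:k-1^{\out}}^{(x)}$, and trace preservation gives $\trthm_{k^{\inp}}[\mathsf{L}_{k^{\inp}:k-1^{\out}}^{(x)}]=\mathbbm{1}_{k-1^{\out}}$ independently of $x$; since this factor decouples the $k-1^{\out}$ space identically in every term of the convex sum, Eq.~\eqref{eq::nsquantumprocess} follows by direct comparison with $\trthm_{k^{\inp},k-1^{\out}}$. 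It then suffices to construct a classical memory process that \emph{violates} Eq.~\eqref{eq::nsquantumprocess}, i.e.\ one that signals beyond the immediately subsequent time (Lem.~\ref{lem::cmbutsignalling}). For $N=3$ I would take an EBC at $t_1$ that measures the state fed into slot $1^{\out}$ and stores its outcome $x_1$ in the environment, act trivially at $t_2$, and at $t_3$ prepare an output conditioned on $x_1$; the state prepared at $t_1$ then influences the statistics at $t_3$, violating the $k=2$ instance of Eq.~\eqref{eq::nsquantumprocess}. This process is manifestly of the form in Def.~\ref{def::classicalmemoryquantumprocess}, so it lies in $\mathtt{CM}\setminus\mathtt{NS}\subseteq\mathtt{CM}\setminus\mathtt{MM}$; since signalling past the next time is only possible for $N\geq 3$, this is a genuinely multi-time effect consistent with the lemma's hypothesis.

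The main obstacle is the strictness direction: one must produce an explicit $\mathtt{CM}$ process and \emph{prove} it admits no convex decomposition into memoryless processes. Checking membership in $\mathtt{MM}$ directly is intractable, since it is a separable-decomposition problem carrying the extra trace-preservation constraint on each element, so the entire argument hinges on finding the right witness. The inclusion $\mathtt{MM}\subseteq\mathtt{NS}$ supplies exactly such a witness: signalling beyond the next time certifies the absence of any $\mathtt{MM}$ realisation without searching over decompositions. The remaining care is to verify that the signalling example genuinely obeys all causality constraints of Eq.~\eqref{eq::causality_conditions} while violating Eq.~\eqref{eq::nsquantumprocess}, and that its building blocks are legitimate entanglement-breaking channels.
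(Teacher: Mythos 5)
Your proposal is correct and follows essentially the same route as the paper: it likewise establishes $\mathtt{MM}\subseteq\mathtt{NS}$ via the trace-preservation computation (Lem.~\ref{lem::MMnosignalling}, App.~\ref{app::cmnotMM}) and then exhibits a $\mathtt{CM}$ process signalling from $t_1$ to $t_3$ (Lem.~\ref{lem::cmbutsignalling}; the paper's Fig.~\ref{fig::circuit} uses a swap into the environment, a computational-basis EBC, and a $\mathtt{CNOT}$ conditioned on the stored outcome, functionally identical to your conditional-preparation example). One trivial repair to your containment step: define $\mathsf{L}_{j+1^{\inp}:j^{\out}}^{(x_{j+1}|x_{j:1})} := \delta_{x_{j+1},x_j}\,\mathsf{L}_{j+1^{\inp}:j^{\out}}^{(x_j)}$, conditioning only on the immediately preceding label, so that each conditional instrument sums to a CPTP channel for \emph{every} conditioning history, including the never-occurring inconsistent ones $x_j\neq x_1$, on which your all-labels-equal definition sums to zero rather than to a trace-preserving map.
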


\noindent The full proof of this claim is presented in App.~\ref{app::cmnotMM}. We first derive a necessary condition that must be satisfied by all $\mathtt{MM}$ processes, namely that they can only signal from any output time to the next input. More precisely, we make use of:

\begin{figure}[t]
\centering
\includegraphics[width=0.85\linewidth]{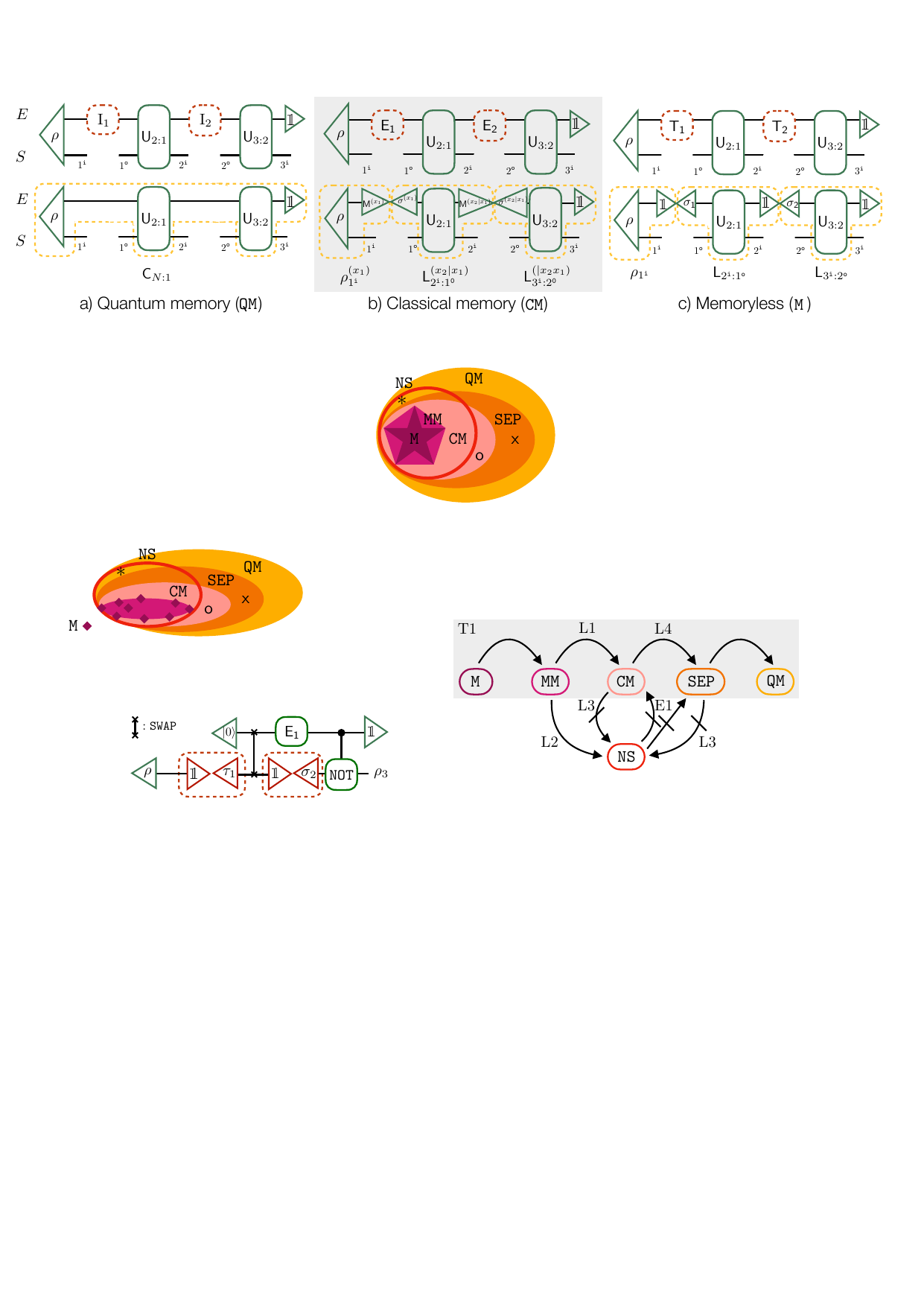}\vspace{-0.375em}
\caption{\emph{Schematic of the Proven Implications Relevant for Thm.~\ref{thm::hierarchy} (T1).} Arrows denote \emph{strict} set inclusion; arrows with a slash denote non-inclusion. Lem.~\ref{lem::DCsubsetofCM} (L1) results from Lem.~\ref{lem::MMnosignalling} (L2) and \ref{lem::cmbutsignalling} (L3). Lem.~\ref{lem::nery} (L4) shows that $\mathtt{CM}$ is included in $\mathtt{SEP}$; along with L3, this implies that $\mathtt{SEP}$ is not included within $\mathtt{NS}$. Finally, the fact that $\mathtt{NS}$ processes can be entangled [Ex.~\ref{ex::entangledns} (E1)] implies that it is not included in $\mathtt{SEP}$ (or, therefore, $\mathtt{CM})$; hence, there is no hierarchical relationship amongst $\mathtt{CM}$, $\mathtt{SEP}$, and $\mathtt{NS}$. }
\label{fig::proofrelations} 
\end{figure}

\begin{lem}\label{lem::MMnosignalling}
For $N\geq3$ times, mixed memoryless quantum processes are a strict subset of non-signalling quantum processes, \textup{i.e.}, $\mathtt{MM} \subsetneq \mathtt{NS}$.
\end{lem}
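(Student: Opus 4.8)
The plan is to establish the claim in two steps: the inclusion $\mathtt{MM}\subseteq\mathtt{NS}$, and the strictness of this inclusion for $N \geq 3$. For the inclusion, I would first observe that the non-signalling constraints of Eq.~\eqref{eq::nsquantumprocess} are \emph{linear} in the Choi operator, so the set $\mathtt{NS}$ (being the intersection of this affine subspace with the positive-semidefinite cone) is convex. Since every $\mathtt{MM}$ process is by definition [Eq.~\eqref{eqn::MM_proc}] a convex combination of memoryless processes, it suffices to verify that each memoryless process $\mathsf{C}^{\textup{M}}_{N:1} = \mathsf{L}_{N^{\inp}:N-1^{\out}} \otimes \cdots \otimes \mathsf{L}_{2^{\inp}:1^{\out}} \otimes \rho_{1^{\inp}}$ of Eq.~\eqref{eq::memorylessprocess-2} lies in $\mathtt{NS}$; the general $\mathtt{MM}$ case then follows by convexity.

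The verification for a single memoryless process is a direct computation. Fix $k \in \{2,\dots,N\}$ and note that both the input space $k^{\inp}$ and the output space $k-1^{\out}$ appear only in the single tensor factor $\mathsf{L}_{k^{\inp}:k-1^{\out}}$, while all other factors are untouched by the operations in Eq.~\eqref{eq::nsquantumprocess}. Tracing out $k^{\inp}$ acts only on this factor, and trace preservation of the CPTP channel gives $\ptr{k^{\inp}}{\mathsf{L}_{k^{\inp}:k-1^{\out}}} = \mathbbm{1}_{k-1^{\out}}$. Hence the left-hand side of Eq.~\eqref{eq::nsquantumprocess} is the original product with its $k$-th factor replaced by $\mathbbm{1}_{k-1^{\out}}$. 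Tracing out $k-1^{\out}$ as well contributes a factor of $d_{k-1^{\out}}$, so the right-hand side reads $\tfrac{\mathbbm{1}_{k-1^{\out}}}{d_{k-1^{\out}}} \otimes d_{k-1^{\out}}(\cdots)$, where $(\cdots)$ is the same product with the $k$-th factor removed; the two $d_{k-1^{\out}}$ factors cancel and both sides coincide. Since this holds for every $k$, the memoryless process is non-signalling, and convexity then yields $\mathtt{MM}\subseteq\mathtt{NS}$.

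It remains to show strictness, i.e., to exhibit a process in $\mathtt{NS}\setminus\mathtt{MM}$ for $N\geq 3$. The key structural observation is that every $\mathtt{MM}$ process has a \emph{separable} Choi operator with respect to the temporal partition $\{1^{\inp}\},\{2^{\inp},1^{\out}\},\dots,\{N^{\inp},N-1^{\out}\}$, since within each branch it is a tensor product and the overall object is a convex mixture; thus $\mathtt{MM}\subseteq\mathtt{SEP}$. Consequently, any \emph{entangled} non-signalling process across this partition automatically lies outside $\mathtt{MM}$. I would therefore exhibit an entangled non-signalling process---concretely, the one constructed in Ex.~\ref{ex::entangledns}, which realises a bipartite non-signalling channel (between the link systems $\{1^{\out},2^{\inp}\}$ and $\{2^{\out},3^{\inp}\}$ for $N=3$) whose Choi operator is entangled across that cut. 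Such a process is non-signalling by construction yet not separable, hence not in $\mathtt{MM}$, establishing the strict inclusion. I would also note that the non-signalling constraint is vacuous for $N=2$ (where $\mathtt{NS}=\mathtt{QM}$), which is precisely why the genuine gap only appears for $N\geq 3$.

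The routine part is the inclusion, which reduces entirely to trace preservation and the product structure. The main obstacle is the strictness direction: one must produce a non-signalling process that is provably entangled across the correct temporal partition (the same partition appearing in the definition of $\mathtt{SEP}$), and verify both that its Choi operator satisfies Eq.~\eqref{eq::nsquantumprocess} and that it is not separable. The latter is a genuine entanglement-detection task; the cleanest route is to borrow the explicit witness of Ex.~\ref{ex::entangledns} rather than to re-derive a bespoke example here.
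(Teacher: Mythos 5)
Your proposal is correct and takes essentially the same route as the paper's proof (App.~\ref{app::cmnotMM}): the inclusion reduces to trace preservation of the CPTP factors in the tensor-product structure (the paper traces out $j+1^{\inp}$ directly on the convex sum, while you factor through convexity of $\mathtt{NS}$---a cosmetic difference), and strictness is witnessed by the entangled common-cause process of Ex.~\ref{ex::entangledns}, which lies outside $\mathtt{SEP} \supseteq \mathtt{MM}$. One small correction to your closing aside: the gap $\mathtt{MM} \subsetneq \mathtt{NS}$ is not itself a three-time phenomenon, since for $N=2$ one still has $\mathtt{MM} = \mathtt{CM} \subsetneq \mathtt{SEP} \subsetneq \mathtt{NS} = \mathtt{QM}$ (Thm.~\ref{thm::hierarchy-2}), e.g., via $\rho_{2^{\inp}1^{\inp}} \otimes \mathbbm{1}_{1^{\out}}$ with $\rho$ entangled; it is the gap $\mathtt{MM} \subsetneq \mathtt{CM}$ (Lem.~\ref{lem::DCsubsetofCM}) that genuinely requires $N\geq3$.
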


\noindent In contradistinction, processes in $\mathtt{CM}$ can signal arbitrarily far into the future by feeding forward classical information. Thus, it is clear that mixed memoryless quantum processes are subject to more constraints than classical memory ones; however, it is \emph{a priori} unclear if there exist classical memory processes that remain outside of the set of mixed memoryless ones. To prove this statement, we present an explicit example of a $\mathtt{CM}$ process that violates the aforementioned no-signalling condition, thereby ultimately precluding a $\mathtt{MM}$ explanation. Formally, we present:\begin{lem}\label{lem::cmbutsignalling}
For $N\geq3$ times, there exist classical memory quantum processes that are outside of the set of non-signalling quantum processes, \textup{i.e.}, $\mathtt{CM} \not\subseteq \mathtt{NS}$.
\end{lem}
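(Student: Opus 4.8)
The plan is to establish the non-inclusion $\mathtt{CM} \not\subseteq \mathtt{NS}$ by exhibiting a single explicit classical memory process that violates one of the non-signalling constraints in Eq.~\eqref{eq::nsquantumprocess}. The guiding intuition is precisely the dynamical feature that sets $\mathtt{CM}$ apart from the non-signalling class: a classical outcome extracted from the system at an early time can be stored and fed forward to condition the dynamics at an arbitrarily later time, whereas Eq.~\eqref{eq::nsquantumprocess} forbids any prepared output $(k-1)^{\out}$ from influencing the process beyond the immediately following input $k^{\inp}$. I would therefore engineer a process in which the state prepared at time $t_1$ is measured, the outcome recorded, and then used to determine the state delivered at time $t_3$, bypassing $t_2$ entirely.

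Concretely, for $N=3$ qubits I would take
\begin{align}
\mathsf{C}_{3:1}^{\textup{CM}} = \sum_{x} \mathsf{L}_{3^{\inp}:2^{\out}}^{(x)} \otimes \mathsf{L}_{2^{\inp}:1^{\out}}^{(x)} \otimes \rho_{1^{\inp}},
\end{align}
with $\rho_{1^{\inp}} = \tfrac{1}{2}\mathbbm{1}_{1^{\inp}}$, the instrument element $\mathsf{L}_{2^{\inp}:1^{\out}}^{(x)} = \ketbra{x}{x}_{1^{\out}} \otimes \ketbra{0}{0}_{2^{\inp}}$ (measure $1^{\out}$ in the computational basis, record $x$, and re-prepare $\ket{0}$ at $2^{\inp}$), and $\mathsf{L}_{3^{\inp}:2^{\out}}^{(x)} = \mathbbm{1}_{2^{\out}} \otimes \ketbra{x}{x}_{3^{\inp}}$ (discard $2^{\out}$ and prepare $\ket{x}$ conditioned on the recorded $x$). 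The first step is to confirm membership in $\mathtt{CM}$: matching to Eq.~\eqref{eq::classicalmemoryquantumprocess-2} with trivial $x_1$ and $x_3$, one checks that $\{\mathsf{L}_{2^{\inp}:1^{\out}}^{(x)}\}_x$ is a genuine instrument, $\ptr{2^{\inp}}{\sum_x \mathsf{L}_{2^{\inp}:1^{\out}}^{(x)}} = \mathbbm{1}_{1^{\out}}$, and that each $\mathsf{L}_{3^{\inp}:2^{\out}}^{(x)}$ is CPTP, $\ptr{3^{\inp}}{\mathsf{L}_{3^{\inp}:2^{\out}}^{(x)}} = \mathbbm{1}_{2^{\out}}$; as both are manifestly entanglement-breaking, the construction has exactly the conditional-instrument form required by Def.~\ref{def::classicalmemoryquantumprocess}.

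The second step is to verify the signalling. Collecting terms yields $\mathsf{C}_{3:1}^{\textup{CM}} = \tfrac{1}{2}\mathbbm{1}_{1^{\inp}} \otimes \ketbra{0}{0}_{2^{\inp}} \otimes \mathbbm{1}_{2^{\out}} \otimes \sum_x \ketbra{x}{x}_{1^{\out}} \otimes \ketbra{x}{x}_{3^{\inp}}$, exposing the persistent classical correlation between $1^{\out}$ and $3^{\inp}$. The constraint to test is the one indexed by $k=2$, which demands that $\ptr{2^{\inp}}{\mathsf{C}_{3:1}^{\textup{CM}}}$ equal $\tfrac{1}{d_{1^{\out}}}\mathbbm{1}_{1^{\out}} \otimes \ptr{2^{\inp}1^{\out}}{\mathsf{C}_{3:1}^{\textup{CM}}}$. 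A short partial-trace computation shows the left-hand side retains the $1^{\out}$--$3^{\inp}$ correlation term, while the right-hand side is a product that is maximally mixed (and hence uncorrelated) on $1^{\out}$; the two therefore differ, so the $k=2$ condition fails and $\mathsf{C}_{3:1}^{\textup{CM}} \notin \mathtt{NS}$. To extend to all $N \geq 3$, I would compose the gadget acting on $t_1,t_2,t_3$ with trace-and-prepare dynamics at all later times; since the added block is a tensor factor uncorrelated with $1^{\out}$, the $k=2$ violation persists verbatim.

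The main obstacle is conceptual rather than computational: correctly identifying which of the constraints in Eq.~\eqref{eq::nsquantumprocess} the chosen signalling pattern actually violates. Signalling from $1^{\out}$ all the way to $3^{\inp}$ does \emph{not} break the $k=3$ constraint (which governs the $2^{\out}\to 3^{\inp}$ link, left untouched here) but rather the $k=2$ one, since it is precisely that condition which certifies that $1^{\out}$ influences nothing past $2^{\inp}$. The only remaining care is the routine bookkeeping ensuring that every element of the decomposition is a legitimate conditional-instrument element (each CP and trace-non-increasing, summing to a CPTP map per conditioning value), so that membership in $\mathtt{CM}$ is beyond doubt.
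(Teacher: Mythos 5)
Your proposal is correct and takes essentially the same approach as the paper: both exhibit an explicit three-time $\mathtt{CM}$ process that signals classically from $t_1$ to $t_3$ and thereby violates the $k=2$ non-signalling condition of Eq.~\eqref{eq::nsquantumprocess} (ruling out membership in $\mathtt{NS}$, and hence, via Lem.~\ref{lem::MMnosignalling}, any $\mathtt{MM}$ realisation). The only differences are cosmetic---the paper's gadget swaps the $t_1$ input into the environment, measures it there, and applies a conditional $\mathtt{CNOT}$ between $t_2$ and $t_3$, whereas you measure at $1^{\out}$ and re-prepare $\ket{x}$ at $3^{\inp}$, verifying the violation algebraically on the Choi operator rather than through the induced reduced channel as in App.~\ref{app::cmnotMM}.
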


\noindent The process, depicted in Fig.~\ref{fig::circuit}, swaps whatever state is input after $t_1$ with a fiducial environment state, then performs an EBC on the environment (ensuring implementability with only classical memory) followed by enacting a $\mathtt{CNOT}$ gate on the system between times $t_2$ and $t_3$. It is straightforward to see that this process can signal from $t_1$ to $t_3$ (by preparing two different states $\tau_1$ and $\tau_1^\prime$ and noting that the final output state varies as a function of any state $\sigma_2$ input at time $t_2$). Invoking Lem.~\ref{lem::MMnosignalling}, such signalling is incompatible with any $\mathtt{MM}$ realisation, although the process is in $\mathtt{CM}$ by construction; hence we have proven Lem.~\ref{lem::DCsubsetofCM}.  

\begin{figure}[t]
\centering
\includegraphics[width=0.75\linewidth]{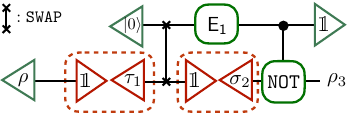}\vspace{-0.5em}
\caption{\emph{A process in $\mathtt{CM}$ but not $\mathtt{NS}$.} The process (green) above in $\mathtt{CM}$ by construction and can signal from $t_1$ to $t_3$; hence it proves Lem.~\ref{lem::cmbutsignalling}. This can be explicitly tested by tracing out the system at times $t_1$ and $t_2$, fixing a state $\sigma_2$, and feeding in an arbitrary state $\tau_1$ (red). For instance, $\rho_3(\tau_1 = \ket{0}) = \sigma_2$ whilst \mbox{$\rho_3(\tau_1^\prime=\ket{1}) = \mathtt{NOT}(\sigma_2)$}. }
\label{fig::circuit} 
\end{figure}

Finally, it is easy to see that $\mathtt{CM} \subseteq \mathtt{SEP}$, since all $\mathtt{CM}$ processes are separable in the Choi representation [see Eq.~\eqref{eq::classicalmemoryquantumprocess-2}]. The \emph{strict} inclusion $\mathtt{CM} \subsetneq \mathtt{SEP}$ was conjectured in Ref.~\cite{Giarmatzi_2021} and proven in Ref.~\cite{Nery_2021}. This distinction is akin to the existence of maps that preserve separable states but cannot be implemented by local operations and classical communication~\cite{Horodecki_2009}; although the former are easier to characterise, the latter are more operationally meaningful. Similarly, since the constraints on separable multi-time processes $\mathtt{SEP}$ are precisely those of multi-partite entanglement of quantum states, one can employ the entire machinery from entanglement theory, e.g., negativity under partial transposition~\cite{Peres_1996,Horodecki_1996}, entanglement witnesses~\cite{Guhne_2009}, and the k-symmetric extension hierarchy~\cite{Doherty_2004}. However, given a process in $\mathtt{SEP}$, there does not generally exist a realisation using only classical memory~\cite{Nery_2021}, as formalised in:
\begin{lem}[\cite{Nery_2021}]\label{lem::nery}
For $N\geq2$ times, there exist separable quantum processes that are outside of the set of classical memory quantum processes, \textup{i.e.}, $\mathtt{CM} \subsetneq \mathtt{SEP}$.
\end{lem}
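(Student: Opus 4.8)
The inclusion $\mathtt{CM} \subseteq \mathtt{SEP}$ is immediate: every process of the form Eq.~\eqref{eq::classicalmemoryquantumprocess-2} is manifestly separable across the time cuts, with each term a positive operator, so it satisfies Def.~\ref{def::separablequantumprocess}. The entire content of the lemma is therefore the \emph{strictness}: one must exhibit a separable, causal process that admits \emph{no} decomposition into conditional instruments. Since the only structural difference between the two sets is the trace preservation demanded of each conditional channel in $\mathtt{CM}$ but dropped in $\mathtt{SEP}$, the task is to show that this constraint genuinely shrinks the set. As $\mathtt{CM}=\mathtt{MM}$ on two times (App.~\ref{app::twotimeequivalence}), the plan is to first settle $N=2$, where the goal reduces to finding a separable, causal two-time process that is \emph{not} a convex mixture of products $\mathsf{L}\otimes\rho$ with $\mathsf{L}$ CPTP and $\rho$ a state [Eq.~\eqref{eqn::MM_proc}]. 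This is precisely the process-tensor avatar of the known gap between separable and LOCC-implementable operations~\cite{Horodecki_2009,Bennett_1999}: a separable Choi operator need not be refinable into a decomposition whose ``channel pieces'' are individually trace preserving.

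To construct such a witness of strictness, I would take a separable process built from a ``nonlocality-without-entanglement''-type product ensemble~\cite{Bennett_1999}, engineered so that the classical correlation between the \emph{component of the forward map} and the \emph{input state} at $t_1$ cannot be reproduced by mixing genuine channels. Concretely, one writes down a separable decomposition in which the individual forward maps $\mathsf{L}^{(x)}$ are \emph{not} trace preserving---only their causal combination is---with a ``rigid'' correlation pattern against the states $\rho^{(x)}$. Equivalently, and more amenable to a machine-checkable argument, one can search for the example numerically: membership in $\mathtt{CM}$ with a fixed (and, by a Carath\'{e}odory argument, bounded) number of outcomes is a semidefinite feasibility problem, so one solves it for a candidate $\mathsf{C}^{\textup{SEP}}$ and reads off a separating hyperplane whenever it is infeasible.

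The hard part is the \emph{non}-membership claim, which quantifies over \emph{all} conceivable classical-memory decompositions rather than a single given one. I would handle this by semidefinite duality: after bounding the number of outcomes, the set of candidate $\mathtt{CM}$ realisations of a fixed $\mathsf{C}^{\textup{SEP}}$ is a spectrahedron, so infeasibility is certified by a dual operator $W$ with $\tr{W\,\mathsf{C}^{\textup{SEP}}}<0$ while $\tr{W\,(\mathsf{L}\otimes\rho)}\geq 0$ for every CPTP channel $\mathsf{L}$ and state $\rho$. Establishing this second inequality---that $W$ is nonnegative on the whole product set---is the genuine obstacle, since that set is not convex in an obvious way. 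I would reduce it to the bilinear optimisation $\min_{\mathsf{L},\rho}\tr{W(\mathsf{L}\otimes\rho)}$, which is an SDP in either argument separately (the channel constraint being $\ptr{2^{\inp}}{\mathsf{L}}=\mathbbm{1}_{1^{\out}}$, and the state minimisation reducing to a smallest-eigenvalue condition), and verify nonnegativity of its optimum. Checking causality and producing a separable decomposition for $\mathsf{C}^{\textup{SEP}}$ itself is routine by comparison.

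Finally, to upgrade from $N=2$ to arbitrary $N\geq 2$, I would embed the two-time example into an $N$-time process by padding the remaining slots with trivial (e.g.\ identity/trace) dynamics in tensor product; padding preserves separability, so the result stays in $\mathtt{SEP}$. For the exclusion, I would verify directly from the instrument structure that $\mathtt{CM}$ is closed under coarse-graining of times---marginalising a classical-memory process (inserting fixed operations or traces at the unprobed times) again yields a classical-memory process on the remaining times, since classical feed-forward survives such restrictions. Hence, were the padded process in $\mathtt{CM}$, its two-time marginal would lie in $\mathtt{CM}=\mathtt{MM}$, contradicting the $N=2$ construction. This yields $\mathtt{CM}\subsetneq\mathtt{SEP}$ for all $N\geq 2$, recovering the result of Ref.~\cite{Nery_2021}.
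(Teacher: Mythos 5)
Your inclusion argument, the reduction of the $N=2$ case to excluding $\mathtt{MM}$ via $\mathtt{CM}=\mathtt{MM}$ (App.~\ref{app::twotimeequivalence}), and the upgrade to $N\geq 2$ are all sound and match the paper's route; in fact your closure-of-$\mathtt{CM}$-under-coarse-graining argument makes precise the bootstrap that the paper only states as a remark (tensoring the two-time example with trivial dynamics, or taking the remaining parties one-dimensional), and it does follow from the conditional-instrument form of Eq.~\eqref{eq::classicalmemoryquantumprocess-2}: inserting a fixed trace-and-prepare operation at an intermediate slot merges the two adjacent conditional instruments into a conditional instrument on the remaining wires, so the two-time marginal of a padded $\mathtt{CM}$ process would again be in $\mathtt{CM}=\mathtt{MM}$.

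The genuine gap is that the existential core of the lemma is never discharged. First, you do not actually produce the separable process: the paper (following Ref.~\cite{Nery_2021}) exhibits the explicit two-time example of Eq.~\eqref{eq::guerin}, built from the product vectors $\ket{000},\ket{101},\ket{+1+},\ket{-1-}$, whose rigid state--channel correlations are exactly of the type you gesture at; your proposal instead defers this to a numerical search. Second, and more seriously, your certification scheme for the witness is not sound as described. After the state minimisation, $\min_{\rho}\tr{W(\mathsf{L}\otimes\rho)}$ reduces to the smallest eigenvalue of an operator depending linearly on $\mathsf{L}$, so what remains is the minimisation of a \emph{concave} function over the convex set of CPTP channels; solving ``an SDP in either argument separately'' (see-saw) only ever yields \emph{upper} bounds on this bilinear minimum, whereas certifying $\tr{W(\mathsf{L}\otimes\rho)}\geq 0$ for \emph{all} products requires a \emph{lower} bound. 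Ref.~\cite{Nery_2021} closes precisely this hole with outer polytope approximations of the channel set (finitely many vertices, hence finitely many linear constraints on $W$)---the method the paper points to in its discussion of characterising $\mathtt{MM}$---and without that step (or an equivalent relaxation) your program establishes neither non-membership of the candidate in $\mathtt{MM}$ nor, consequently, the strictness $\mathtt{CM}\subsetneq\mathtt{SEP}$. (A minor further point: the Carath\'{e}odory bound on the number of outcomes is unnecessary for the duality argument, since $\mathtt{MM}$ is the closed convex hull of a compact product set and a separating hyperplane exists by Hahn--Banach whenever membership fails.)
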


\noindent Note that the above lemma holds for $N\geq2$ times (in contrast to the previous statements which require $N\geq3$). The proof presented in Ref.~\cite{Nery_2021} shows that the following separable $N=2$-time process
\begin{align} \label{eq::guerin}
\mathsf{C}^{\textup{SEP}}_{2:1} = \frac{1}{2} &\left( \ket{000}\bra{000} + \ket{101}\bra{101} \right. + \notag \\* 
    &\left. \ket{+1+}\bra{+1+} + \ket{-1-}\bra{-1-} \right),
\end{align}
where $\mathsf{C}_{2:1}^{\mathrm{SEP}} \in \Lcal(\Hcal_{2^{\inp}} \otimes \Hcal_{1^{\out}} \otimes \Hcal_{1^{\inp}})$, does \emph{not} admit a classical memory realisation. Notice that, one can easily  ``bootstrap'' this example to $N>2$-time scenarios by simply considering that the other parties to be one-dimensional, or by tensoring $\mathsf{C}^{\textup{SEP}}_{2:1}$ with an arbitrary process on the other times. Together, Lems.~\ref{lem::nery} and~\ref{lem::DCsubsetofCM} thus complete the proof of Thm.~\ref{thm::hierarchy}.

\vspace{-0.5em}\subsection{Non-signalling Quantum Processes \mbox{Revisited}}\label{subsec::nonsignalling}

We now move to analyse the set $\mathtt{NS}$ of non-signalling processes that we introduced above, which proved to be a helpful auxiliary set for the proof of Thm.~\ref{thm::hierarchy}. These processes constitute a rich class in their own right and can exhibit many interesting phenomena, including genuinely quantum correlations (see Ex.~\ref{ex::entangledns}) that nonetheless do not allow for signalling between arbitrary (non-adjacent) times $t_j$ and $t_k$ $(t_k>t_j)$ by preparing different quantum states at $t_j$. This can be seen by means of an explicit example:
\begin{example}\label{ex::entangledns}
    Consider a process in which an $N$-partite state $\rho_{N^{\inp}\cdots 1^{\inp}}$ is initially prepared, with each subsystem being fed out at each time and the process simply discarding the post-measurement states. Any such process is represented by
    \begin{align}
        \widetilde{\mathsf{C}}_{N:1}^\textup{NS} = \rho_{N^{\inp}\cdots 1^{\inp}} \otimes \mathbbm{1}_{{N-1}^{\out} \cdots 1^{\out}},
    \end{align}
    Clearly, $\widetilde{\mathsf{C}}_{N:1}^\textup{NS}$ is non-signalling. On the other hand, it can exhibit genuinely quantum correlations between arbitrary times; in particular, whenever $\rho_{N^{\inp}\cdots 1^{\inp}}$ is entangled, $\widetilde{\mathsf{C}}_{N:1}^\textup{NS} \not\in \mathtt{SEP}$.
\end{example}

\noindent This example coincides with the definition of \emph{quantum common cause} processes that have been studied throughout the literature~\cite{Costa_2016,Feix_2017,Guo_2021,Nery_2021}, where the initial state that is fed out over times is considered the cause of all observed correlations, rather than any active interventions. Such behaviour is reminiscent of non-signalling channels, where each input can \textit{only} signal to its corresponding output, but not to any other one, with this property holding true even when allowing for correlated input states~\cite{piani_properties_2006}. 

When considered as multi-time processes, however, the notion of signalling presents itself more subtly, since each timeslot contains both an input \emph{and} an output, making quantum memory---and thus signalling properties---inherently instrument-dependent~\cite{Taranto_2019_PRL, Taranto_2019_PRA, Taranto_2021, Taranto_2023}. Specifically, the non-signalling property of $\mathtt{NS}$ processes states that any state preparation at time $t_j$ (with corresponding Choi operator $\mathsf{T}_{j} = \sigma_{j^{\out}} \otimes \mathbbm{1}_{j^{\inp}}$)  blocks all signalling from earlier times $t_i < t_j$ to future times $t_k > t_j$. However, this does \textit{not} imply that there is no possibility of signalling between non-adjacent times in $\mathtt{NS}$ processes when \emph{different} operations are performed; additionally, such processes can display memory effects of arbitrary length. To see the former point, one can, e.g., consider a memoryless process that only consists of an initial state preparation and trivial dynamics between times, i.e., 
\begin{align}
    \widehat{\mathsf{C}}^{\textup{NS}}_{N:1} = \mathrm{I}_{N^{\inp}{N-1}^{\out}} \otimes \cdots \otimes \mathrm{I}_{2^{\inp}{1}^{\out}} \otimes \rho_{1^{\inp}}.
\end{align}
Now, performing two different CPTP maps $\mathsf{N}^{(\alpha)}_{j^{\out}j^{\inp}}$ (with $\alpha \in \{0,1\}$) at time $t_j$ (and doing nothing the remaining times) yields the state $\rho^{(\alpha)} = \mathcal{N}^{(\alpha)}[\rho]$ at any time $t_k > t_j$, i.e., the choice $\alpha$ of which CPTP map was performed at $t_j$ can be transmitted in forward in time (arbitrarily far), despite $\mathsf{C}^{\textup{NS}}_{N:1}$ obviously being a non-signalling process. 

Nevertheless, the process $\widehat{\mathsf{C}}^{\textup{NS}}_{N:1}$ described above is still memoryless: when only \emph{causal breaks}---i.e., instruments that exclusively contain independent measure-and-prepare operations $\mathsf{B}_{j} = \sigma_{j^{\out}} \otimes \mathsf{M}_{j^{\inp}}$---are permitted at time $t_j$, then no signalling between non-adjacent times can be detected~\cite{Pollock_2018_PRA, Pollock_2018_PRL}, even on the level of individually recorded measurement outcomes. However, as is evident from Ex.~\ref{ex::entangledns}, $\mathtt{NS}$ processes can admit (quantum) memory effects between arbitrarily spaced times $t_j$ and $t_k$, even when only causal breaks are permitted. There, the $\mathtt{NS}$ process $\widetilde{\mathsf{C}}_{N:1}^\textup{NS} = \rho_{N^{\inp}\cdots 1^{\inp}} \otimes \mathbbm{1}_{{N-1}^{\out} \cdots 1^{\out}}$ was introduced. Evidently, the correlations exhibited by this processes reflect those that are present in $\rho_{N^{\inp}\cdots 1^{\inp}}$, and hence can concern arbitrary times and be fully detectable via causal breaks~\cite{Taranto_2019_PRA} (such processes then even allow for genuine multi-partite entanglement between different times, a phenomenon that is also present for two-time $\mathtt{NS}$ processes~\cite{Milz_2021}). 

Finally, we remark that the fact that all trace-and-prepare operations $\mathsf{T}_j = \sigma_{j^{\out}}\otimes \mathbbm{1}_{j^{\out}}$ block any signalling in $\mathtt{NS}$ processes may be viewed as \emph{memorylessness on average}. A process is memoryless if there do not exist two different causal breaks $\mathsf{B}_j^{(\alpha)} = \sigma_{j^{\out}} \otimes \mathsf{M}^{(\alpha)}_{j^{\inp}}$ with the same state fed forward but different POVM elements $\mathsf{M}^{(\alpha=1,2)}_{j^{\inp}}$, such that information from the past can be detected at times after $t_j$~\cite{Pollock_2018_PRA, Pollock_2018_PRL}. In general, $\mathtt{NS}$ processes do \emph{not} satisfy this property. However, \textit{averaging} over all possible outcomes $\alpha$ in any causal break yields a trace-and-prepare operation $\mathsf{T}_j = \sum_\alpha \mathsf{B}^{(\alpha)}_j = \sigma_{j^{\out}} \otimes \mathbbm{1}_{j^{\inp}}$, which blocks \emph{all} influence of the past on the future for any non-signalling process. This, in turn, provides us with the intuitive interpretation of $\mathtt{NS}$ processes as those processes that are memoryless on average, but not necessarily at a fine-grained level.

\vspace{-0.5em}\subsection{Separable Quantum Processes Revisited}\label{subsec::separable}

Analogous to $\mathtt{NS}$ quantum processes---and in stark contrast to both $\mathtt{MM}$ and $\mathtt{CM}$ quantum processes---beyond the two-time case it is not straightforward to provide a circuit representation of processes that lie in $\mathtt{SEP}$. To provide this characterisation in the two-time case, it is helpful to first define the set of separable \emph{channels}~\cite{Rains_1997, vedral_entanglement_1998, Bennett_1999}: 
\begin{defn}
    \emph{Separable channels} $\Lambda^{\textup{SC}} : \quad \Lcal(\Hcal_{A} \otimes \Hcal_{B}) \rightarrow  \Lcal(\Hcal_{A'} \otimes \Hcal_{B'})$ are quantum channels that can be written as \mbox{$\Lambda^{\textup{SC}}[X] = \sum_{k} (\alpha_k \otimes \beta_k)X(\alpha_k^\dagger \otimes \beta_k^\dagger)$}, where $\alpha_k: \Hcal_A \rightarrow \Hcal_{A'}$ and $\beta_k: \Hcal_B \rightarrow \Hcal_{B'}$ for all $k$. We denote the set of separable channels by $\mathtt{SC}$.
\end{defn}

\noindent Although structurally related, we emphasise that the elements of $\mathtt{SEP}$ and $\mathtt{SC}$ do not coincide (and are even different types of objects). However, for the two-time case, they are intimately connected: here, we show that any separable process of the form 
\begin{gather}
    \mathsf{C}_{2:1}^{\mathrm{SEP}} = \sum_x p_x \mathsf{L}_{2^{\inp}1^{\out}}^{(x)} \otimes \rho_{1^{\inp}}^{(x)},
\end{gather}
where $\mathsf{C}_{2:1}^{\mathrm{SEP}} \in \Lcal(\Hcal_{2^{\inp}} \otimes \Hcal_{1^{\out}} \otimes \Hcal_{1^{\inp}})$,
can be represented as a quantum process engendered by a circuit with an initial pure quantum state $\xi_{1^{\inp}E_{1}}$ and a separable channel $\Lambda^{\textup{SC}}: \Lcal(\Hcal_{1^{\out}} \otimes \Hcal_{E_{1}}) \rightarrow \Lcal(\Hcal_{2^{\inp}})$, where $\Lambda^{\textup{SC}}$ has Kraus operators $\alpha_k: \Hcal_{E_1} \rightarrow \mathbbm{C}$ and $\beta_k: \Hcal_{1^{\out}} \rightarrow  \Hcal_{2^{\inp}}$. This can be seen directly by considering the canonical way of constructing circuit representations of quantum processes~\cite{Chiribella_2009, Bisio_2011}. 

Concretely, let $\rho_{1^{\inp}} := \tfrac{1}{d_{1^{\out}}} \trthm_{2^{\inp}1^{\out}}[\mathsf{C}^{\mathrm{SEP}}_{2:1}]$ be the initial state of the system, and  $\xi_{1^{\inp}E} := \ketbra{\xi_{1^{\inp}E}}{\xi_{1^{\inp}E}} = \rho_{1^{\inp}}^{\frac{1}{2}} \ketbra{\Phi^+_{1^{\inp}E}}{\Phi^+_{1^{\inp}E}} \rho_{1^{\inp}}^{\frac{1}{2}}$ be its canonical dilation, where we set $\Hcal_{E} \cong \Hcal_{1^{\inp}}$ and $\ket{\Phi^+_{1^{\inp}E}} \in \mathcal{H}_{1^{\inp}} \otimes \mathcal{H}_E$ is the unnormalised maximally entangled state. Now, let $\eta_{2^{\inp} 1^{\out} 1^{\inp}} := \rho_{1^{\inp}}^{-\frac{1}{2}} \mathsf{C}_{2:1}^{\mathrm{SEP}} \rho_{1^{\inp}}^{-\frac{1}{2}}$, where $\rho_{1^{\inp}}^{-\frac{1}{2}}$ is the Moore-Penrose pseudoinverse of $\rho_{1^{\inp}}^{\frac{1}{2}}$. For notational convenience, we now relabel $1^{\inp} \leftrightarrow E$ in $\eta_{2^{\inp} 1^{\out} 1^{\inp}}$, such that $\eta_{2^{\inp} 1^{\out}  E} \in \Lcal(\Hcal_{2^{\inp}} \otimes \Hcal_{1^{\out}} \otimes \Hcal_{E} )$. As a consequence of the causality constraints on $\mathsf{C}^{\mathrm{SEP}}_{2:1}$, we have that $\trthm_{2^{\inp}}[\eta_{2^{\inp} 1^{\out} E}] = \mathbbm{1}_{1^{\out} E}$, i.e., it is the Choi operator of a CPTP map $\Lambda^{\textup{SC}}: \Lcal(\Hcal_{E}\otimes \Hcal_{1^{\out}}) \rightarrow \Lcal(\Hcal_{2^{\inp}})$. Now, it is easy to see that $\Lambda^{\textup{SC}} \in \mathtt{SC}$, since its Choi operator $\eta_{2^{\inp} 1^{\out} E} = \sum_{x} p_x \mathsf{L}_{2^{\inp}1^{\out}}^{(x)} \otimes \widetilde{\rho}_{E}^{\,(x)}$, where $\widetilde{\rho}_{E}^{\,(x)} := \rho_{E}^{-\frac{1}{2}} \rho_{E}^{(x)}$, is separable (with respect to the $2^{\inp} 1^{\out} | E$ cut), which implies that its Kraus operators are of the form $\alpha_k \otimes \beta_k$. Additionally, it is straightforward to verify that
\begin{gather}
 \mathsf{C}_{2:1}^{\mathrm{SEP}} = \xi_{1^{\inp}E} \star \eta_{2^{\out} 1^{\out} E}\, 
\end{gather}
holds, which shows that any two-time process in $\mathtt{SEP}$ can be obtained from a separable channel in $\mathtt{SC}$. 

We emphasise that this correspondence does not necessarily hold true in the multi-time setting. There, although it is obvious that any quantum process consisting of only channels from $\mathtt{SC}$ is an element of $\mathtt{SEP}$, it is unclear if the converse implication holds. 

\section{Computational Characterisations}\label{sec::computational}

We now discuss the problem of deciding whether a given process is an element of each of the sets discussed in this work. Before proceeding, we first remark that checking whether a linear operator is positive semidefinite can be achieved by first checking if it is self-adjoint and then verifying that all its eigenvalues are non-negative. Alternatively, the Cholesky decomposition provides an efficient method to ensure that a given matrix is positive semidefinite.

\vspace{-0.75em}\subsubsection*{\texorpdfstring{$\mathtt{QM}:$ }{} SDP Characterisation}

In order to check if a positive semidefinite operator \mbox{$\mathsf{C}_{N:1}\geq0$} is a valid quantum memory process (i.e., belongs to $\mathtt{QM}$), one must verify the causality conditions presented in Eq.~\eqref{eq::causality_conditions}, which can be evaluated explicitly. These conditions must be checked first to ensure that the operator represents a valid process, before one tries to determine which of the set of processes it fits within (see below). Since all such conditions correspond to linear and positive semidefinite constraints on the Choi operator, the set $\mathtt{QM}$ admits characterisation means of a \emph{semidefinite program} \textbf{(SDP)}~\cite{Chiribella_2016,Skrzypczyk_2023}, which can be solved efficiently. 

\vspace{-0.75em}\subsubsection*{\texorpdfstring{$\mathtt{M}:$}{} Simple Characterisation}

Unlike all other sets of processes we consider, the set of memoryless process is not convex. Nonetheless, checking whether a quantum process $\mathsf{C}_{N:1}\in\mathtt{QM}$ represents a memoryless one (i.e., belongs to $\mathtt{M}$) is straightforward. One can evaluate $\widetilde{\mathsf{L}}_{k^{\inp} k-1^{\out}} := \ptr{\overline{k^{\inp} k-1^{\out}}}{\mathsf{C}_{N:1}}/d_{\overline{k-1^{\out}}}$ where $\overline{X}$ denotes the complement of $X$ and $d_{{\overline{k-1^{\out}}}}:=\tr{\mathbbm{1}_{{\overline{k-1^{\out}}}}}$, then construct 
\begin{align} \label{eq::memorylessprocess-3}
    \widetilde{\mathsf{C}}_{N:1}= \widetilde{\mathsf{L}}_{N^{\inp}:N-1^{\out}} \otimes \hdots \otimes \widetilde{\mathsf{L}}_{2^{\inp}:1^{\out}} \otimes \widetilde{\rho}_{1^{\inp}},
\end{align}
and check if $\widetilde{\mathsf{C}}_{N:1} = \mathsf{C}_{N:1}$. If this equation is false, then $\mathsf{C}_{N:1}$ is not a memoryless quantum process. 

Additionally, one can quantify the amount of memory $\mathcal{N}$ in a process by measuring the distance $\mathcal{D}$ to its nearest memoryless counterpart~\cite{Pollock_2018_PRL}, i.e.,
\begin{align}
    \mathcal{N}(\mathsf{C}_{N:1}) := \min_{\mathsf{C}_{N:1}^{\textup{M}} \in \mathtt{M}}\mathcal{D}(\mathsf{C}_{N:1}\|\mathsf{C}_{N:1}^{\textup{M}}).
\end{align}
Choosing the quantum relative entropy $\mathcal{S}$ as an appropriate (pseudo-)distance avoids the need for performing the above minimisation, since the nearest memoryless process is guaranteed to be the one constructed from the product of its marginals [as per Eq.~\eqref{eq::memorylessprocess-3}]. This can be seen as follows: consider the difference \mbox{$\mathcal{S}(\rho_{AB}\|\tau_A\otimes \tau_B) - \mathcal{S}(\rho_{AB}\|\rho_A\otimes \rho_B)$}, where $\tau_{A/B}$ are generic states and $\rho_{A/B} := \ptr{B/A}{\rho_{AB}}$ are the marginals of the global state. By definition of the quantum relative entropy $\mathcal{S}(A\|B) := \tr{A \log(A-B)}$ and using $\log(A\otimes B) = \log(A) \otimes \mathbbm{1}_B + \mathbbm{1}_A \otimes \log(B)$, this quantity reduces to $\mathcal{S}(\rho_A \| \tau_A) + \mathcal{S}(\rho_B \| \tau_B)$. Minimising the original quantity over uncorrelated states then amounts to minimising this last sum. Finally noting that the quantum relative entropy is non-negative and vanishes iff the arguments are equal asserts the claim. Hence, the above formula reduces to
\begin{align}
    \mathcal{N}(\mathsf{C}_{N:1}) = \mathcal{S}(\mathsf{C}_{N:1}\|\widetilde{\mathsf{C}}_{N:1}).
\end{align}
A process is memoryless iff this quantity vanishes.

\vspace{-0.75em}\subsubsection*{\texorpdfstring{$\mathtt{NS}:$ }{} SDP Characterisation}

In order to check if a quantum process $\mathsf{C}_{N:1}\in\mathtt{QM}$ is a non-signalling one (i.e., belongs to $\mathtt{NS}$), one must verify the non-signalling conditions in Eq.~\eqref{eq::nsquantumprocess}. Since all such conditions are linear and positive semidefinite constraints, the set of non-signalling processes $\mathtt{NS}$ admits an SDP characterisation.

\vspace{-0.75em}\subsubsection*{\texorpdfstring{$\mathtt{SEP}:$ }{} SDP Hierarchy Characterisation}

To check if a quantum process $\mathsf{C}_{N:1}\in\mathtt{QM}$ is a separable one (i.e., belongs to $\mathtt{SEP}$), one must ensure that the (multi-partite) operator $\mathsf{C}_{N:1}$ is separable---a problem known to be difficult~\cite{Gurvits_2003}. One may employ methods from entanglement theory, such as negativity under partial transposition~\cite{Peres_1996,Horodecki_1996}, entanglement witnesses~\cite{Guhne_2009}, and the k-symmetric extension hierarchy~\cite{Doherty_2004}. The last technique provides a characterisation in terms of a converging hierarchy of SDPs.

\vspace{-0.75em}\subsubsection*{\texorpdfstring{$\mathtt{MM}:$ }{} Polytope / SDP Hierarchy \mbox{Characterisation}}

Checking if a quantum process $\mathsf{C}_{N:1}\in\mathtt{QM}$ is a mixed memoryless one (i.e., belongs to $\mathtt{MM}$) is similar to checking if it is separable. However, the key difference is that a $\mathtt{MM}$ process $\mathsf{C}_{N:1}^{\textup{MM}}$ needs to be expressible as a convex combination of quantum channels (rather than states), which are operators that must respect an additional constraint. In particular, while quantum states satisfy $\tr{\rho}=1$, quantum channels must satisfy $\ptr{\out}{\mathsf{L}_{\out \inp}}=\mathbbm{1}_{\inp}$. This prevents one from directly employing the methods of entanglement theory to characterise $\mathtt{MM}$. Nonetheless, in Ref.~\cite{Nery_2021}, the authors present a method to construct inner and outer polytope approximations for $\mathtt{MM}$; similar methods have also been considered in Ref.~\cite{Ohst_2022}.

\vspace{-0.75em}\subsubsection*{\texorpdfstring{$\mathtt{CM}:$ }{} Open Problem}

Despite having an operational construction in terms of the underlying system-environment dynamics, checking whether a quantum process $\mathsf{C}_{N:1}\in\mathtt{QM}$ is a classical memory one (i.e., belongs to $\mathtt{CM}$) appears to be a very difficult problem. Loosely speaking, the difficulty of said task is akin to that of determining whether a channel can be implemented via LOCC: here, one needs to both track the entire history and deduce whether the dynamics could potentially be implemented by some sequence of conditional instruments. In this work, we have exploited the relationship between classical memory process and both non-signalling and separable ones in order to obtain non-trivial statements on $\mathtt{CM}$; nonetheless, systematic and efficient converging methods that provide both inner and outer approximations remain elusive. 

To this end, in App.~\ref{app::convexity}, we show that $\mathtt{CM}$ is a convex set. Hence, there exists a family of polytopes allowing arbitrarily good inner and outer approximations. Moreover, one can always ``witness'' processes lying outside $\mathtt{CM}$ via a hyperplane through the space of quantum processes. Constructing such witnesses or tractable characterisations for the set of classical memory quantum processes remains an important open question. In principle, one might employ a polytope / SDP hierarchy approach, in similar vein to the MM hierarchy presented in Ref.~\cite{Nery_2021}. However, since the set of classical memory quantum processes involves extremely non-trivial constraints on the decomposition elements, for $N>2$ time process, such a characterisation would have a high complexity cost and become intractable even for qubit scenarios.

\section{Concluding Discussion \& Outlook}

We have systematically divided quantum processes into classes that display remarkably distinct behaviours. First, in Thm.~\ref{thm::hierarchy}, we demonstrated a strict hierarchy based upon differing memory effects---none $\mathtt{M}$, three variants of classical memory $\mathtt{MM}, \mathtt{CM}$, and $\mathtt{SEP}$, and quantum memory $\mathtt{QM}$. In order to prove this hierarchy, we made use of the auxiliary set of non-signalling processes $\mathtt{NS}$, which are of significant interest in their own right. Upon deeper analysis, we showed that, although for two-time processes $\mathtt{CM} \subsetneq \mathtt{SEP} \subsetneq \mathtt{NS}$, there does not exist a hierarchical relationship between the sets in general (see Fig.~\ref{fig::hierarchy}). Our work therefore illuminates genuinely multi-time phenomena of open quantum processes.

Our results are of central importance in two ways. From a foundational standpoint, they distinguish quantum and classical memory, outlining the ultimate limitations of quantum information processing and providing a holistic characterisation of the hierarchy of possible memory structures in quantum theory. On the practical side, since noise in quantum devices---and thus the observed memory effects---will predominately be classical in the near future, our work provides a methodological framework upon which efficient and reliable quantum devices can be built. Accordingly, the concepts explored and results presented here should have immediate impact on various fields of quantum science, including quantum information theory, optimal control, open quantum systems, and quantum foundations, to name but a few. 

\vspace{-0.75em}\begin{acknowledgments}
The authors thank Seiseki Akibue, Jessica Bavaresco, Kavan Modi, and Satoshi Yoshida for insightful discussions. P. T. acknowledges support from the Japan Society for the Promotion of Science (JSPS) Postdoctoral Fellowship for Research in Japan. M. M. acknowledges support from MEXT Quantum Leap Flagship Program (MEXT QLEAP) JPMXS0118069605 (Q-LEAP of MEXT 2018-2027) and Japan Society for the Promotion of Science (JSPS) KAKENHI Grant No. JSPS KAKENHI 21H03394. This project has received funding from the European Union’s Horizon Europe research and innovation programme under the Marie Sk{\l}odowska-Curie grant agreement No. 101068332. 
\end{acknowledgments}

\def\bibsection{\section*{References}} 

%

\vspace{3em}
\onecolumngrid \newpage
\appendix

\section{Classical Memory Quantum Processes}\label{app::classicalmemoryquantumprocesses}

Here we demonstrate that quantum processes with an EBC applied on the environment in between each pair of times are equivalent to the classical memory processes described in Ref.~\cite{Giarmatzi_2021} [see Eq.~\eqref{eq::classicalmemoryquantumprocess-2}]. 

We begin by showing that the system-environment construction formulated in Def.~\ref{def::classicalmemoryquantumprocess} leads to Choi operators of the form presented in Eq.~\eqref{eq::classicalmemoryquantumprocess-2}. To this end, we first decompose each EBC into a measure-and-prepare implementation
\begin{align}
    \mathsf{E}_{E_j} := \sum_{x_j} \sigma_{j^{\out}}^{(x_j)} \otimes \mathsf{M}_{j^{\inp}}^{(x_j)},
\end{align}
where each $\sigma_{j^{\out}}^{(x_j)}$ is a quantum state and $\{ \mathsf{M}_{j^{\inp}}^{(x_j)} \}$ represents a POVM~\cite{Horodecki_2003}. One can then define:
\begin{align}
    \rho_{1^{\inp}}^{(x_1)} &:=  \mathsf{M}_{E_{1^{\inp}}}^{(x_1)} \star \rho_{(ES)_{1^{\inp}}}, \label{eq::stateensemble} \\
    \mathsf{L}_{j+1^{\inp}:j^{\out}}^{(x_{j+1}|x_{j:1})} &:= \mathsf{M}_{E_{j+1^{\inp}}}^{(x_{j+1}|x_{j:1})} \star \mathsf{U}_{(ES)_{j+1^{\inp} j^{\out}}} \star \sigma_{E_{j^{\out}}}^{(x_j|x_{j-1:1})}, \label{eq::conditionalinstrument} \\
    \mathsf{L}_{N^{\inp}:N-1^{\out}}^{(|x_{N-1:1})} &:= \mathbbm{1}_{E_{N^{\inp}}} \star \mathsf{U}_{(ES)_{N^{\inp} N-1^{\out}}} \star \sigma_{E_{N-1^{\out}}}^{(x_{N-1}|x_{N-2:1})} \label{eq::finalchannel}.
\end{align}
It is easy to verify that by construction: Eq.~\eqref{eq::stateensemble} corresponds to a state ensemble, i.e., $\rho_{1^{\inp}}^{(x_1)} \geq 0$ and $\tr{\sum_{x_1}\rho_{1^{\inp}}^{(x_1)}} = 1$; Eq.~\eqref{eq::conditionalinstrument} corresponds to a conditional instrument, i.e., $\mathsf{L}_{j+1^{\inp}:j^{\out}}^{(x_{j+1}|x_{j:1})} \geq 0$ and $\ptr{j+1^{\inp}}{\sum_{x_{j+1}} \mathsf{L}_{j+1^{\inp}:j^{\out}}^{(x_{j+1}|x_{j:1})}} = \mathbbm{1}_{j^{\out}} \; \forall \; x_{j:1}$ (since each $\sigma_{E_{j^{\out}}}^{(x_j|x_{j-1:1})}$ is a normalised quantum state for all $x_{j:1}$ and $\{\mathsf{M}_{E_{j+1^{\inp}}}^{(x_{j+1}|x_{j:1})}\}$ forms a POVM over the potential outcomes $x_{j+1}$ for all $x_{j:1}$); and Eq.~\eqref{eq::finalchannel} corresponds to a CPTP channel for every previous observation sequence, i.e., $\mathsf{L}_{N^{\inp}:N-1^{\out}}^{(|x_{N-1:1})} \geq 0$ and $\ptr{N^{\inp}}{\mathsf{L}_{N^{\inp}:N-1^{\out}}^{(|x_{N-1:1})}} = \mathbbm{1}_{N-1^{\out}} \; \forall \; x_{N-1:1}$ (since each $\sigma_{E_{N-1^{\out}}}^{(x_{N-1}|x_{N-1:1})}$ is a normalised quantum state). The link product preserves positivity of operators~\cite{Chiribella_2009}; furthermore, since each object above is defined upon independent spaces, the link product reduces to the tensor product. Hence, it follows that the construction of $\mathsf{C}_{N:1}^{\textup{CM}}$ according to Eq.~\eqref{eq::classicalmemoryquantumprocess-2} is a positive semidefinite operator. Lastly, it is also straightforward to check that it satisfies the causality conditions outlined in Eq.~\eqref{eq::causality_conditions}, thereby constituting a valid quantum process.

In the converse direction, note that any state ensemble $\{ \rho_{1^{\inp}}^{(x_1)} \}$ can be constructed via a dilation to a state on a larger system-environment space $\rho_{(ES)_{1^{\inp}}}$ and a measurement $\mathsf{M}_{E_{1^{\inp}}}^{(x_1)}$ on the environment; any (conditional) instrument can be realised similarly in terms of an initially-prepared environment state (which can depend upon the entire previous sequence of outcomes) $\sigma_{E_{j^{\out}}}^{(x_j|x_{j-1:1})}$, a global unitary dynamics $\mathsf{U}_{(ES)_{j+1^{\inp} j^{\out}}}$ and a measurement (whose choice can depend upon the previous outcomes) $\mathsf{M}_{E_{j+1^{\inp}}}^{(x_{j+1}|x_{j:1})}$ on the environment; and lastly that any channel can be realised in terms of an initially-prepared environment state (which can depend upon the entire previous sequence of outcomes) $\sigma_{E_{{N-1}^{\out}}}^{(x_{N-1}|x_{N-2:1})}$, a global unitary dynamics $\mathsf{U}_{(ES)_{N^{\inp} N-1^{\out}}}$ and a final trace over the environment $\mathbbm{1}_{E_{N^{\inp}}}$. In other words, the r.h.s. of Eqs.~\eqref{eq::stateensemble}--\eqref{eq::finalchannel} always provides a potential realisation of the l.h.s., which is essentially the Stinespring representation of probabilistic quantum transformations~\cite{Stinespring_1955}. Concatenating these elementary portions of dynamics thus allows one to represent any quantum process in Eq.~\eqref{eq::classicalmemoryquantumprocess-2} to system-environment dynamics of the explicit form presented in Def.~\ref{def::classicalmemoryquantumprocess}.

\section{Two-Time Equivalences Between Memory Classes}\label{app::twotimeequivalence}

\begin{thm}\label{thm::hierarchy-2}
For $N=2$ times, we have the relations:
\begin{gather}
    \mathtt{M} \subsetneq \mathtt{MM} = \mathtt{CM} \subsetneq \mathtt{SEP} \subsetneq \mathtt{NS} = \mathtt{QM}.
\end{gather}
\end{thm}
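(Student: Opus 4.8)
The plan is to establish the chain relation by relation, exploiting the fact that for $N=2$ the history-dependent conditional structure distinguishing the middle sets in the general case degenerates. The two genuinely $N=2$-specific collapses — $\mathtt{MM}=\mathtt{CM}$ and $\mathtt{NS}=\mathtt{QM}$ — carry the weight of the theorem; the remaining strict inclusions either hold for all $N$ or follow from results already cited. To begin, $\mathtt{M}\subsetneq\mathtt{MM}$ holds for every $N\geq 2$: the inclusion is the $\vert x\vert=1$ special case of Eq.~\eqref{eqn::MM_proc}, while strictness follows from the non-convexity of $\mathtt{M}$, so that a nontrivial mixture of two distinct product-of-channels processes escapes $\mathtt{M}$ but lies in $\mathtt{MM}$.

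For the equivalence $\mathtt{MM}=\mathtt{CM}$, I would unpack Eq.~\eqref{eq::classicalmemoryquantumprocess-2} at $N=2$. Here there is only a single EBC, so no outcome is ever fed forward to condition a later instrument; consequently every ``conditional instrument'' reduces to an ordinary CPTP channel $\mathsf{L}^{(\vert x_1)}_{2^{\inp}:1^{\out}}$, one per outcome $x_1$, and the process reads $\mathsf{C}^{\textup{CM}}_{2:1}=\sum_{x_1}\mathsf{L}^{(\vert x_1)}_{2^{\inp}:1^{\out}}\otimes\rho^{(x_1)}_{1^{\inp}}$ with $\{\rho^{(x_1)}_{1^{\inp}}\}$ a state ensemble. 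Writing $\rho^{(x_1)}_{1^{\inp}}=p_{x_1}\widetilde{\rho}^{(x_1)}_{1^{\inp}}$ with $p_{x_1}=\tr{\rho^{(x_1)}_{1^{\inp}}}$ casts this in exactly the $\mathtt{MM}$ form of Eq.~\eqref{eqn::MM_proc}, yielding $\mathtt{CM}\subseteq\mathtt{MM}$; the reverse inclusion is the same identification read backwards. The conceptual content is that the instrument-valued (rather than channel-valued) decomposition elements responsible for the strict gap at $N\geq 3$ cannot arise when there is no subsequent time to signal to.

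Next, $\mathtt{CM}\subsetneq\mathtt{SEP}$ follows directly from Lemma~\ref{lem::nery}: the inclusion is clear since every term in Eq.~\eqref{eq::classicalmemoryquantumprocess-2} factorises across the temporal cut, and strictness is witnessed by the separable process of Eq.~\eqref{eq::guerin}, which admits no classical-memory realisation. Finally I would prove $\mathtt{NS}=\mathtt{QM}$ and deduce $\mathtt{SEP}\subsetneq\mathtt{NS}$. For $N=2$ the only nontrivial non-signalling constraint in Eq.~\eqref{eq::nsquantumprocess} is the one at $k=2$. The causality condition Eq.~\eqref{eq::causality_conditions} already forces $\trthm_{2^{\inp}}[\mathsf{C}_{2:1}]=\mathbbm{1}_{1^{\out}}\otimes\mathsf{C}_{1:1}$; tracing out $1^{\out}$ identifies $\mathsf{C}_{1:1}=\trthm_{2^{\inp}1^{\out}}[\mathsf{C}_{2:1}]/d_{1^{\out}}$, and substituting back reproduces precisely the right-hand side of the $k=2$ condition. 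Hence $\mathtt{QM}\subseteq\mathtt{NS}$, and combined with the trivial $\mathtt{NS}\subseteq\mathtt{QM}$ this gives equality. Strictness of $\mathtt{SEP}\subsetneq\mathtt{NS}=\mathtt{QM}$ is then witnessed by any process whose Choi operator is entangled across the temporal cut — e.g.\ $\mathrm{I}_{2^{\inp}1^{\out}}\otimes\rho_{1^{\inp}}$, built from the identity channel between times — which is a legitimate quantum process but not separable.

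I expect the only place demanding genuine care to be the bookkeeping in the $\mathtt{MM}=\mathtt{CM}$ step: one must verify that the state-ensemble normalisation and the per-outcome trace-preservation conditions match up exactly, and articulate clearly why the history-dependence of the intermediate instruments — the source of the strict inclusion for $N\geq 3$ — is vacuous at $N=2$. The $\mathtt{NS}=\mathtt{QM}$ collapse, while conceptually the most striking, reduces to a short computation once the single relevant value of $k$ is isolated.
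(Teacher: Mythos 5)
Most of your proposal is correct and runs essentially parallel to the paper's Appendix~\ref{app::twotimeequivalence}: the $\mathtt{NS}=\mathtt{QM}$ collapse via the single $k=2$ constraint being implied by causality is exactly Lem.~\ref{lem::nseqqm-2}, and $\mathtt{CM}\subsetneq\mathtt{SEP}$ via Eq.~\eqref{eq::guerin} is exactly Lem.~\ref{lem::sepnotcm-2}. Your route to $\mathtt{MM}=\mathtt{CM}$ is genuinely different from the paper's: you specialise the decomposition~\eqref{eq::classicalmemoryquantumprocess-2} to $N=2$ and observe that the conditional-instrument structure is vacuous (the only decomposition elements are an initial state ensemble and a per-outcome CPTP channel), whereas the paper (Lem.~\ref{lem::MMeqcm-2}) goes through an intermediary characterisation in terms of ``initial separable state'' processes $\rho^{\textup{SEP}}_{ES_{1^{\inp}}}\star\mathsf{D}_{ES_{1^{\out}}S_{2^{\inp}}}$, proving both equivalences by explicit dilation/link-product constructions. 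Your version is shorter and cleaner, but note it is only complete because App.~\ref{app::classicalmemoryquantumprocesses} establishes that Eq.~\eqref{eq::classicalmemoryquantumprocess-2} is an \emph{exact} (if-and-only-if) characterisation of $\mathtt{CM}$; you lean on that equivalence implicitly, so you should cite it explicitly. The paper's intermediary buys a physically transparent picture (separable initial system--environment correlations) at the cost of length.

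There is, however, one genuine error: your witness for the strictness of $\mathtt{SEP}\subsetneq\mathtt{NS}$. The process $\mathrm{I}_{2^{\inp}1^{\out}}\otimes\rho_{1^{\inp}}$ is \emph{not} outside $\mathtt{SEP}$ --- it is memoryless, being precisely of the form~\eqref{eq::memorylessprocess-2} with $\mathsf{L}_{2^{\inp}:1^{\out}}=\mathrm{I}$, hence it lies in $\mathtt{M}\subsetneq\mathtt{MM}\subsetneq\mathtt{SEP}$. The point you missed is that Def.~\ref{def::separablequantumprocess} demands separability only across the \emph{inter-time} cut $2^{\inp}1^{\out}\,|\,1^{\inp}$: each decomposition element $\mathsf{L}^{(x)}_{2^{\inp}:1^{\out}}$ is an arbitrary positive operator on $\Hcal_{2^{\inp}}\otimes\Hcal_{1^{\out}}$, so entanglement between a channel's input and output legs --- which is all that the identity channel's Choi operator exhibits --- is free within $\mathtt{SEP}$. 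A correct witness must be entangled across the $2^{\inp}1^{\out}\,|\,1^{\inp}$ cut; for instance, the two-time specialisation of Ex.~\ref{ex::entangledns}, $\widetilde{\mathsf{C}}^{\textup{NS}}_{2:1}=\rho_{2^{\inp}1^{\inp}}\otimes\mathbbm{1}_{1^{\out}}$ with $\rho_{2^{\inp}1^{\inp}}$ entangled: this is a valid (indeed non-signalling) process, yet $\trthm_{1^{\out}}[\mathsf{C}^{\textup{SEP}}_{2:1}]/d_{1^{\out}}$ is necessarily separable across $2^{\inp}\,|\,1^{\inp}$ for any $\mathsf{C}^{\textup{SEP}}_{2:1}\in\mathtt{SEP}$, whereas here it equals the entangled $\rho_{2^{\inp}1^{\inp}}$. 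This is the example the paper itself gestures at (entanglement across the partition $1^{\inp}:1^{\out}2^{\inp}$). With that substitution your proof is complete.
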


First of all, we will reiterate the proofs that show that, for processes defined upon two times, classical memory and mixed memoryless quantum processes are equivalent. Note that $\mathtt{MM} = \mathtt{CM}$ for $N=2$ has been shown in Refs.~\cite{Giarmatzi_2021,Nery_2021}, but we include it here for the sake of completeness.

\begin{lem}\label{lem::MMeqcm-2}
For $N=2$ times, mixed memoryless quantum processes coincide with classical memory quantum processes, i.e., $\mathtt{MM} = \mathtt{CM}$. 
\end{lem}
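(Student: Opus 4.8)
The plan is to work directly with the Choi-level characterisations of both sets [Eqs.~\eqref{eq::classicalmemoryquantumprocess-2} and~\eqref{eqn::MM_proc}], which App.~\ref{app::classicalmemoryquantumprocesses} already certifies to be equivalent to the system--environment constructions of Defs.~\ref{def::classicalmemoryquantumprocess} and~\ref{def::directcausequantumprocess}. Specialising to $N=2$, a $\mathtt{CM}$ process carries a single classical label $x_1$ (the outcome of the one and only EBC, applied to the environment at $t_1$), and since there is no intermediate slot the conditional-instrument factors of Eq.~\eqref{eq::conditionalinstrument} are simply absent. Hence Eq.~\eqref{eq::classicalmemoryquantumprocess-2} collapses to $\mathsf{C}_{2:1}^{\textup{CM}} = \sum_{x_1} \mathsf{L}_{2^{\inp}:1^{\out}}^{(|x_1)} \otimes \rho_{1^{\inp}}^{(x_1)}$, where $\{\rho_{1^{\inp}}^{(x_1)}\}$ is a (sub-normalised) state ensemble and, crucially, each $\mathsf{L}_{2^{\inp}:1^{\out}}^{(|x_1)}$ is a genuine CPTP channel rather than a non-trivial instrument---the last environment operation being the trace $\mathbbm{1}_{E_{2^{\inp}}}$ of Eq.~\eqref{eq::finalchannel}, not a further measurement. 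The $N=2$ $\mathtt{MM}$ process reads $\mathsf{C}_{2:1}^{\textup{MM}} = \sum_x p_x\, \mathsf{L}_{2^{\inp}:1^{\out}}^{(x)} \otimes \rho_{1^{\inp}}^{(x)}$, with $p_x$ a probability distribution, each $\rho_{1^{\inp}}^{(x)}$ a normalised state, and each $\mathsf{L}_{2^{\inp}:1^{\out}}^{(x)}$ CPTP. Comparing the two collapsed forms makes the identification essentially immediate, so the proof reduces to converting between a sub-normalised ensemble and a convex weighting of normalised states.

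For the inclusion $\mathtt{MM} \subseteq \mathtt{CM}$ (which in fact holds for all $N$) I would read off $x_1 := x$, set $\rho_{1^{\inp}}^{(x_1)} := p_x\, \rho_{1^{\inp}}^{(x)}$---a valid state ensemble, since $\sum_x \trthm[p_x \rho_{1^{\inp}}^{(x)}] = \sum_x p_x = 1$---and retain each CPTP map as the (single-outcome) conditional channel $\mathsf{L}_{2^{\inp}:1^{\out}}^{(|x_1)}$, reproducing the collapsed $\mathtt{CM}$ form verbatim. For the converse $\mathtt{CM} \subseteq \mathtt{MM}$ I would reverse the decomposition: define $p_{x_1} := \trthm[\rho_{1^{\inp}}^{(x_1)}] \geq 0$, discard the vanishing terms, and for $p_{x_1} > 0$ set the normalised state $\widetilde{\rho}_{1^{\inp}}^{(x_1)} := \rho_{1^{\inp}}^{(x_1)} / p_{x_1}$. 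The state-ensemble condition gives $\sum_{x_1} p_{x_1} = 1$, so $\{p_{x_1}\}$ is a probability distribution, and each $\mathsf{L}_{2^{\inp}:1^{\out}}^{(|x_1)}$ is already CPTP; substituting $\rho_{1^{\inp}}^{(x_1)} = p_{x_1} \widetilde{\rho}_{1^{\inp}}^{(x_1)}$ yields precisely Eq.~\eqref{eqn::MM_proc} with $x = x_1$, establishing the equality.

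The only point that is conceptual rather than computational---and the one I expect to flag as the crux---is the observation underlying the collapse of Eq.~\eqref{eq::classicalmemoryquantumprocess-2}: for $N=2$ there is no intermediate time at which the environment can measure an incoming system state and feed the outcome forward, so the only classical label is the initial EBC outcome $x_1$ and the single remaining dynamical factor is the unconditional channel of Eq.~\eqref{eq::finalchannel}. This is exactly the structural feature that breaks down for $N\geq3$, where a genuine intermediate instrument can correlate its outcome with the system history---the extra signalling capacity that Lem.~\ref{lem::cmbutsignalling} exploits to separate $\mathtt{CM}$ from $\mathtt{MM}$---and thus pinpoints why the equivalence is strictly a two-time phenomenon.
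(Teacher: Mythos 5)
Your proof is correct, and it takes a genuinely more direct route than the paper's. The paper proves Lem.~\ref{lem::MMeqcm-2} by introducing an auxiliary class of ``initial separable state'' processes $\rho^{\textup{SEP}}_{E S_{1^{\inp}}} \star \mathsf{D}_{E S_{1^{\out}} S_{2^{\inp}}}$ [Eq.~\eqref{eq::initialseparablestate}] and establishing two equivalences at the level of circuit realisations: first $\mathtt{CM}$ equals this class (expanding the single EBC in one direction; dilating an arbitrary ensemble to a POVM on a larger space in the other), then this class equals $\mathtt{MM}$ (computing $\rho_E^{(x)} \star \mathsf{D}_{E S_{1^{\out}} S_{2^{\inp}}} = \mathsf{L}_{2^{\inp}:1^{\out}}^{(x)}$ in one direction; encoding the label in classical flag states $\ketbra{x}{x}_E$ with a correspondingly controlled channel in the other). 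You instead collapse the Choi-level characterisation~\eqref{eq::classicalmemoryquantumprocess-2} at $N=2$ and reduce the equality to bookkeeping between a subnormalised ensemble and a convex mixture of normalised states; this is legitimate and non-circular, because the dilation work your argument needs is exactly what App.~\ref{app::classicalmemoryquantumprocesses} proves independently, including the fact that the final factor is the \emph{unconditional} CPTP channel of Eq.~\eqref{eq::finalchannel} (note that the index $x_N$ displayed in Eq.~\eqref{eq::classicalmemoryquantumprocess-2} is simply summed into that channel, as you implicitly do). Both arguments turn on the same structural fact, which you correctly isolate as the crux: for $N=2$ the only classical label available is the initial EBC outcome, with no intermediate slot at which an instrument outcome could depend on the system history --- precisely what fails for $N\geq3$. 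What your route buys is brevity and a transparent explanation of why the collapse is a two-time phenomenon; what the paper's route buys is a self-contained proof (independent of the general-$N$ equivalence) together with the separable-initial-state picture, which recasts two-time $\mathtt{CM}$ as separability of the initial system--environment state and is conceptually useful beyond this lemma.
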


\begin{proof}
We first show that any two-time classical memory quantum process can be written in the form
\begin{align}\label{eq::initialseparablestate}
    \mathsf{C}_{2:1}^{\textup{CM}} = \rho_{E S_{1^{\inp}}}^{\textup{SEP}} \star \mathsf{D}_{E S_{1^{\out}} S_{2^{\inp}}},
\end{align}
where $\rho_{E S_{1^{\inp}}}^{\textup{SEP}}$ is a separable state and $\mathsf{D}_{E S_{1^{\out}} S_{2^{\inp}}}$ is (the Choi operator of) a proper quantum channel $\mathcal{L}(\mathcal{H}_E \otimes \mathcal{H}_{S_{1^{\out}}}) \mapsto \mathcal{L}(\mathcal{H}_{S_{2^{\inp}}})$. Here, the link product $\star$ simply denotes that the environment subsystem of the initial state feeds forward as an input into the dynamics between times. We will refer to such processes as ``initial separable state'' processes and use them as an intermediary to prove that $\mathtt{MM} = \mathtt{CM}$.

To see that an classical memory quantum process leads to the form of Eq.~\eqref{eq::initialseparablestate}, note that by considering an EBC applied to the environment between times (which we split into the input and output spaces of the channel), $\mathsf{E}_{E_1} = \sum_x \mathsf{M}_{E_{1^{\inp}}}^{(x)} \otimes \sigma_{E_{1^{\out}}}^{(x)}$, one has
\begin{align}
    \mathsf{C}_{2:1}^{\textup{CM}} &= \sum_x \rho_{S_{1^{\inp}} E_{1^{\inp}}} \star \mathsf{M}_{E_{1^{\inp}}}^{(x)} \otimes \sigma_{E_{1^{\out}}}^{(x)} \star \mathsf{D}_{E_{1^{\out}} S_{1^{\out}} S_{2^{\inp}}} \notag \\
    &= \sum_x p_x \rho_{S_{1^{\inp}}}^{(x)} \otimes \sigma_{E_{1^{\out}}}^{(x)} \star \mathsf{D}_{E_{1^{\out}} S_{1^{\out}} S_{2^{\inp}}} = \rho_{E_{1^{\out}} S_{1^{\inp}}}^{\textup{SEP}} \star \mathsf{D}_{E_{1^{\out}} S_{1^{\out}} S_{2^{\inp}}},
\end{align}
where we defined $p_x \rho_{S_{1^{\inp}}}^{(x)} := \rho_{S_{1^{\inp}} E_{1^{\inp}}} \star \mathsf{M}_{E_{1^{\inp}}}^{(x)}$. This asserts the claim since each $\sigma_{E_{1^{\out}}}^{(x)}$ is a valid quantum state, and so the initial state of the process is indeed separable, and furthermore $\mathsf{D}_{E_{1^{\out}} S_{1^{\out}} S_{2^{\inp}}}$ is a valid quantum channel. 

In the converse direction, beginning with an initial separable state process, one has
\begin{align}
    \rho_{E S_{1^{\inp}}}^{\textup{SEP}} \star \mathsf{D}_{E S_{1^{\out}} S_{2^{\inp}}} &= \sum_{x} p_x \rho_{S_{1^{\inp}}}^{(x)} \otimes \rho_{E}^{(x)} \star \mathsf{D}_{E S_{1^{\out}} S_{2^{\inp}}}.
\end{align}
Since any reduced ensemble can arise from the measurement of part of some larger system, we write $\rho_{S_{1^{\inp}}}^{(x)} = \ptr{A}{\mathsf{M}_A^{(x)} \rho_{A S_{1^{\inp}}}}$ for some Hilbert space $\mathcal{H}_A$ and POVM $\{ \mathsf{M}^{(x)}\}$. Relabelling the systems $A \mapsto E_{1^{\inp}}$ and $E \mapsto E_{1^{\out}}$, we then have a classical memory process construction in terms of the EBC on the environment $\mathsf{E}_{E_1} = \sum_x \mathsf{M}_{E_{1^{\inp}}}^{(x)} \otimes \rho_{E_{1^{\out}}}^{(x)}$.

In addition to the equivalence between classical memory quantum processes and those with an initially separable states, we now show that these are equivalent to mixed memoryless processes. To this end, we need to show that
\begin{align}
    \rho_{E S_{1^{\inp}}}^{\textup{SEP}} \star \mathsf{D}_{E S_{1^{\out}} S_{2^{\inp}}} \Leftrightarrow \sum_{x} p_x \mathsf{L}_{2^{\inp}:1^{\out}}^{(x)} \otimes \rho_{1^{\inp}}^{(x)}.
\end{align}
In the forwards direction, one can simply expand the separable state as $\rho_{E S_{1^{\inp}}}^{\textup{SEP}} = \sum_{x} p_x \rho_{S_{1^{\inp}}}^{(x)} \otimes \rho_{E}^{(x)}$ and explicitly calculate the link product
to yield $\rho_E^{(x)} \star \mathsf{D}_{E S_{1^{\out}} S_{2^{\inp}}} = \mathsf{L}_{2^{\inp}:1^{\out}}^{(x)}$, which is CPTP for each $x$ by construction. 

To prove the converse direction, one can define
\begin{align}
    \rho_{E S_{1^{\inp}}}^{\textup{SEP}} := \sum_x p_x \rho^{(x)}_{1^{\inp}} \otimes \ketbra{x}{x}_{E} \notag \\
    \mathsf{D}_{E S_{1^{\out}} S_{2^{\inp}}} := \sum_x \ketbra{x}{x}_{E} \otimes \mathsf{L}_{2^{\inp}:1^{\out}}^{(x)}.
\end{align}
It is straightforward to show that $\rho_{E S_{1^{\inp}}}^{\textup{SEP}} \star  \mathsf{D}_{E S_{1^{\out}} S_{2^{\inp}}} = \sum_{x} p_x \mathsf{L}_{2^{\inp}:1^{\out}}^{(x)} \otimes \rho_{1^{\inp}}^{(x)}$ holds true, asserting the claim.
\end{proof}

The next part of Thm.~\ref{thm::hierarchy-2} makes use of:
\begin{lem}[\cite{Nery_2021}]\label{lem::sepnotcm-2}
For $N=2$ times, mixed memoryless quantum processes are a strict subset of separable quantum processes, i.e., $\mathtt{MM} \subsetneq \mathtt{SEP}$. 
\end{lem}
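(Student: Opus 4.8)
The plan is to establish the inclusion and the strict separation independently, leaning on the equivalence $\mathtt{MM}=\mathtt{CM}$ already proven for two times in Lem.~\ref{lem::MMeqcm-2}. The inclusion $\mathtt{MM}\subseteq\mathtt{SEP}$ is immediate upon comparing Def.~\ref{def::directcausequantumprocess} with Def.~\ref{def::separablequantumprocess}: both describe convex combinations of the same product form $\sum_x p_x \mathsf{L}_{2^{\inp}:1^{\out}}^{(x)}\otimes\rho_{1^{\inp}}^{(x)}$, the sole difference being that $\mathtt{MM}$ demands each $\mathsf{L}^{(x)}$ be a genuine CPTP channel whereas $\mathtt{SEP}$ retains only positivity of each factor together with overall causality. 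Since every CPTP Choi operator is in particular positive semidefinite, any $\mathtt{MM}$ process is \emph{a fortiori} a $\mathtt{SEP}$ process, giving $\mathtt{MM}\subseteq\mathtt{SEP}$.

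For the strict separation I would exhibit a witness residing in $\mathtt{SEP}\setminus\mathtt{MM}$. The natural candidate is the separable process of Eq.~\eqref{eq::guerin}, which is manifestly in $\mathtt{SEP}$: each of its four terms is a pure product state across the $2^{\inp}1^{\out}|1^{\inp}$ cut, so it takes exactly the form of Eq.~\eqref{eq::separablequantumprocess}. It then remains to argue that this process admits \emph{no} mixed-memoryless realisation. Here I would invoke Lem.~\ref{lem::MMeqcm-2}, by which $\mathtt{MM}=\mathtt{CM}$ for $N=2$, together with Lem.~\ref{lem::nery}, which states precisely that the process of Eq.~\eqref{eq::guerin} lies outside $\mathtt{CM}$. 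Chaining these yields that the witness is in $\mathtt{SEP}$ but not in $\mathtt{MM}$, establishing $\mathtt{MM}\subsetneq\mathtt{SEP}$.

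The genuinely nontrivial content---inherited from Ref.~\cite{Nery_2021} and the main obstacle of the argument---is the claim that Eq.~\eqref{eq::guerin} is not classical-memory (equivalently, not mixed-memoryless). The difficulty is that $\mathtt{MM}$ membership is not ordinary separability: one must rule out \emph{every} decomposition into products of \emph{trace-preserving} channels, i.e.\ separability subject to the additional affine constraint $\ptr{2^{\inp}}{\mathsf{L}_{2^{\inp}:1^{\out}}^{(x)}}=\mathbbm{1}_{1^{\out}}$ on each local factor. The route I would take is to phrase $\mathtt{CM}$ membership as a feasibility problem and certify its infeasibility via a separating hyperplane through the space of processes---a witness operator $\mathsf{W}$ with $\tr{\mathsf{W}\,\mathsf{C}}\geq 0$ for all $\mathsf{C}\in\mathtt{CM}$ yet $\tr{\mathsf{W}\,\mathsf{C}^{\textup{SEP}}_{2:1}}<0$. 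Intuitively, the obstruction is that the two branches of Eq.~\eqref{eq::guerin} correlate the output $2^{\inp}$ to the input $1^{\inp}$ in two mutually unbiased bases conditioned on $1^{\out}$, a pattern of input--output correlations that cannot be reproduced by probabilistically selecting deterministic trace-preserving dynamics.
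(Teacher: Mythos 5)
Your proposal is correct and follows essentially the same route as the paper: the inclusion $\mathtt{MM}\subseteq\mathtt{SEP}$ is immediate from the definitions, and strictness is witnessed by the separable process of Eq.~\eqref{eq::guerin}, which is in $\mathtt{SEP}$ by construction yet---as shown in Ref.~\cite{Nery_2021}---admits no classical memory realisation, hence (via $\mathtt{MM}=\mathtt{CM}$ at $N=2$, Lem.~\ref{lem::MMeqcm-2}) no mixed memoryless one. Your closing sketch of certifying non-membership through a separating-hyperplane witness accurately describes the method of Ref.~\cite{Nery_2021}, which the paper, like you, cites rather than reproduces.
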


The above lemma is proven by the explicit example presented in Eq.~\eqref{eq::guerin} of the main text, which was first presented in Ref.~\cite{Nery_2021}. This process is separable (by construction) and so necessarily evades detection by the entanglement witnesses for quantum processes developed in Ref.~\cite{Giarmatzi_2021}. Nonetheless, it \emph{cannot} be realised by a classical memory quantum process, demonstrating that separability (plus overall causality) is not strong enough to enforce that the individual elements in the separable decomposition can be written as a sequence of conditional instruments, even for processes defined upon only two times~\cite{Nery_2021}.

\begin{lem}\label{lem::nseqqm-2}
For $N=2$ times, non-signalling quantum processes coincide with general quantum processes, i.e., $\mathtt{NS} = \mathtt{QM}$. 
\end{lem}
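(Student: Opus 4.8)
The plan is to show that, for two times, the defining linear constraints of $\mathtt{NS}$ and $\mathtt{QM}$ collapse onto a single condition, so the two sets carve out the same slice of the positive cone. Both sets consist of operators $\mathsf{C}_{2:1} \geq 0$ on $\Hcal_{2^{\inp}} \otimes \Hcal_{1^{\out}} \otimes \Hcal_{1^{\inp}}$ carrying the standing two-time normalisation $\tr{\mathsf{C}_{2:1}} = d_{1^{\out}}$, so the only remaining content is the respective trace constraint. For $\mathtt{QM}$ this is the single nontrivial causality condition of Eq.~\eqref{eq::causality_conditions}, namely $\trthm_{2^{\inp}}[\mathsf{C}_{2:1}] = \mathbbm{1}_{1^{\out}} \otimes \mathsf{C}_{1:1}$ together with $\mathsf{C}_{1:1}$ being a quantum state; for $\mathtt{NS}$ it is the $k=2$ instance of Eq.~\eqref{eq::nsquantumprocess} (the $k=1$ instance is vacuous, as there is no earlier output wire, and for $N=2$ no intermediate values of $k$ arise). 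I would therefore simply verify that these two constraints are equivalent.

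For $\mathtt{QM} \subseteq \mathtt{NS}$, I would start from the causality condition and eliminate $\mathsf{C}_{1:1}$ by tracing out the remaining output wire: applying $\trthm_{1^{\out}}$ to both sides of $\trthm_{2^{\inp}}[\mathsf{C}_{2:1}] = \mathbbm{1}_{1^{\out}} \otimes \mathsf{C}_{1:1}$ yields $\trthm_{2^{\inp}1^{\out}}[\mathsf{C}_{2:1}] = d_{1^{\out}}\, \mathsf{C}_{1:1}$, so that $\mathsf{C}_{1:1} = \trthm_{2^{\inp}1^{\out}}[\mathsf{C}_{2:1}]/d_{1^{\out}}$. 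Substituting this back into the causality condition reproduces exactly the $k=2$ form of Eq.~\eqref{eq::nsquantumprocess}, establishing the inclusion.

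For the reverse inclusion $\mathtt{NS} \subseteq \mathtt{QM}$, I would run the argument backwards: given an operator satisfying the non-signalling condition, define $\mathsf{C}_{1:1} := \trthm_{2^{\inp}1^{\out}}[\mathsf{C}_{2:1}]/d_{1^{\out}}$, whereupon Eq.~\eqref{eq::nsquantumprocess} is literally the first causality condition. It then remains to check that $\mathsf{C}_{1:1}$ is a legitimate quantum state: positivity $\mathsf{C}_{1:1} \geq 0$ follows because partial traces preserve positive semidefiniteness, and $\trthm_{1^{\inp}}[\mathsf{C}_{1:1}] = \tr{\mathsf{C}_{2:1}}/d_{1^{\out}} = 1$ follows from the standing normalisation. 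This supplies the final causality condition and completes the equivalence.

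The main thing to get right is the bookkeeping around normalisation, since the non-signalling constraint of Eq.~\eqref{eq::nsquantumprocess} is homogeneous and hence blind to overall scaling; the equivalence only holds once both sets are taken inside the normalised process cone. I would also flag explicitly why no other values of $k$ enter for $N=2$, so that comparing the single pair of $k=2$ constraints genuinely exhausts both definitions. Beyond this, no deeper obstacle is expected: unlike the $N \geq 3$ case, where the intermediate causality conditions are strictly stronger than the non-signalling ones, for two times there are simply no intermediate slots at which the two notions could diverge.
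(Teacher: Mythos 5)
Your proposal is correct and is essentially the paper's one-line argument made fully explicit: for $N=2$ the single ($k=2$) instance of the non-signalling constraint~\eqref{eq::nsquantumprocess} and the sole causality condition of Eq.~\eqref{eq::causality_conditions} are the same linear constraint once one identifies $\mathsf{C}_{1:1} = \trthm_{2^{\inp}1^{\out}}[\mathsf{C}_{2:1}]/d_{1^{\out}}$, and your normalisation bookkeeping (the non-signalling condition being homogeneous) is a sound point the paper leaves implicit. One slip in your closing aside, which does not affect the $N=2$ argument: for $N\geq 3$ the non-signalling conditions are strictly \emph{stronger} than the causality conditions---they imply them, whence $\mathtt{NS}\subsetneq\mathtt{QM}$---not the other way around as you wrote.
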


\begin{proof}
    Simply notice that for $N=2$ times, there is only one non-signalling constraint, which is automatically already ensured by the causality constraint of a general quantum process.
\end{proof}

Lastly, it is clear that for $N=2$ times, $\mathtt{SEP} \subsetneq \mathtt{NS} = \mathtt{QM}$ since processes in the latter set can be entangled. This completes the proof of Thm.~\ref{thm::hierarchy-2}.

\section{Classical Memory Quantum Process with no Mixed Memoryless Quantum Process Realisation}\label{app::cmnotMM}

Here we prove the strict relation $\mathtt{MM} \subsetneq \mathtt{CM}$, as claimed in Lem.~\ref{lem::DCsubsetofCM}. We first derive a necessary condition that must be satisfied by all mixed memoryless processes, namely that they can only exhibit signalling from any given output time to the next input, i.e., we first prove Lem.~\ref{lem::MMnosignalling}. To complete the proof of Lem.~\ref{lem::DCsubsetofCM}, we construct an explicit example of a classical memory process that violates the aforementioned condition (i.e., Lem.~\ref{lem::cmbutsignalling}), thereby asserting the claim. 

\begin{proof}[Proof: Lem.~\ref{lem::MMnosignalling}.]
Begin with the general form of a $\mathtt{MM}$ process and compute 
\begin{align}
    &\mathrm{tr}_{j+1^{\inp}}[\mathsf{C}_{N:1}^{\textup{MM}}]
    = \sum_x p_x \mathrm{tr}_{j+1^{\inp}}[\bigotimes_{i=1}^{N-1} L_{i+1^{\inp}:i^{\out}}^{(x)} \otimes \rho_{1^{\inp}}^{(x)}] \notag \\
    &= \sum_x p_x \left(\bigotimes_{i=j+1}^{N-1} L_{i+1^{\inp}:i^{\out}}^{(x)} \right) \otimes \mathbbm{1}_{j^{\out}} \left(\bigotimes_{i=1}^{j-1} L_{i+1^{\inp}:i^{\out}}^{(x)} \otimes \rho_{1^{\inp}}^{(x)}\right) \notag \\
    &= \mathbbm{1}_{j^{\out}} \otimes \mathsf{C}^{\textup{MM}}_{N:1 \setminus j+1^{\out}j^{\inp}}.
\end{align}
To derive the second equality, we made use of the fact that each $L_{j+1^{\inp}:j^{\out}}^{(x)}$ represents a trace preserving map from the output space to the subsequent input space (and therefore tracing the input space gives an identity operator on the previous output space). In the final line, we identify the $\mathtt{MM}$ process on the remaining times $\mathsf{C}^{\textup{MM}}_{N:1 \setminus j+1^{\out}j^{\inp}} := \sum_x p_x \left(\bigotimes_{i=j+1}^{N-1} L_{i+1^{\inp}:i^{\out}}^{(x)} \right)\left( \bigotimes_{i=1}^{j-1} L_{i+1^{\inp}:i^{\out}}^{(x)} \otimes \rho_{1^{\inp}}^{(x)} \right)$. This implies that $\mathtt{MM} \subseteq \mathtt{NS}$. Furthermore, one can see that $\mathtt{MM}$ form a strict subset of $\mathtt{NS}$ since the former are defined as the convex hull of memoryless quantum processes, whereas the latter are built by affine combinations; crucially, there exist such affine (but not convex) combinations that lead to valid processes. This latter point can be seen by considering Lem.~\ref{lem::nseqqm-2}, which states that for $N=2$ times, $\mathtt{NS} = \mathtt{QM}$; we also clearly have that $\mathtt{MM} \subsetneq \mathtt{NS}$ (which can be seen by considering any process that is entangled across the partition $1^{\inp} : 1^{\out} 2^{\inp}$ [see Ex.~\ref{ex::entangledns}]). Since the former coincide with the set of affine combinations of $\mathtt{M}$ whereas the latter coincide with the set of convex combinations, these sets of processes differ and their exclusion is non-empty.
\end{proof}

As a side remark, note that this lemma implies that the set of $\mathtt{MM}$ processes are a measure-zero subset of quantum processes, as they are confined to a manifold of strictly lower dimension than the full space of quantum processes, as evidenced by the linear non-signalling constraints on the process; so true is the case for $\mathtt{NS}$. This fact notwithstanding, such processes provide an interesting case study of practical relevance, as the former provide perhaps the simplest non-trivial quantum processes to implement, while the latter are crucial for uncovering causal relations between events.

\begin{proof}[Proof: Lem.~\ref{lem::cmbutsignalling}.]

In terms of operational consequences, the no-signalling condition here means that if one chooses any time $t_j$ at which they feed in a fresh state to the process and at the next timestep $t_{j+1}$ they discard the state output by the process, then \emph{all} observed phenomena on the remaining times are independent of any state input at time $t_j$. Although this holds true for mixed memoryless quantum processes, it is not the case for the more general classical memory quantum process, which can exhibit (classical) signalling from any point in time to any other in the future via the classical memory mechanism made manifest through the environment. We now present an example that demonstrates this, thereby proving our claim.

The process is as follows (see Fig.~\ref{fig::circuit}): an arbitrary state of the system $\rho$ is accessible to the agent at time $t_1$. Following the agent's (arbitrary) interrogation, the post-measurement system state is swapped with that of the environment, which begins in the fiducial state $\ket{0}$; the agent can then measure the system state at time $t_2$. Concurrently, the environment is subject to the EBC which measures it in the computational basis. Subsequently, a $\mathtt{CNOT}$ gate is applied to the system (controlled on the environment), which can finally be interrogated at time $t_3$. 

Although this process is a $\mathtt{CM}$ one (by construction), it exhibits signalling from time $t_1$ to $t_3$, which is in contradiction with the structure of a $\mathtt{MM}$ quantum process (by Lem.~\ref{lem::MMnosignalling}). More precisely, we have that $\ptr{2^{\inp}}{\mathsf{C}_{3:1}} \neq \mathsf{C}_{3^{\inp} 2^{\out} 1^{\inp}} \otimes \mathbbm{1}_{1^{\out}}$, which can be seen as follows. First, note that the non-signalling structure on the r.h.s. here implies that any (trace-preserving) reduction to a process from the input space on time $t_1$ to that on time $t_3$ must be of the form $\rho_{3^{\inp}} \otimes \mathbbm{1}_{1^{\out}}$ for some quantum state $\rho$. Then consider tracing the input spaces associated to times $t_1$ and $t_2$ (corresponding to the act of discarding the system states), and feeding in an arbitrary but fixed state $\sigma_{2^{\out}}$ at time $t_2$. In doing so, one yields a reduced channel from the output at time $t_1$ to the input at time $t_3$, which 
\begin{align}
    \mathsf{C}_{3^{\inp} 1^{\out}}(\sigma_{2^{\out}}) := \mathsf{C}_{3:1} \star \mathbbm{1}_{2^{\inp}} \star \mathbbm{1}_{1^{\inp}} \star \sigma_{2^{\out}} \neq \rho_{3^{\inp}} \otimes \mathbbm{1}_{1^{\out}}.
\end{align}
The non-equivalence can be demonstrated by noting that a process of the form $\rho_{3^{\inp}} \otimes \mathbbm{1}_{1^{\out}}$ takes \emph{any} input state $\tau_{1^{\out}}$ at time $t_1$ to the unique state $\rho_{3^{\inp}}$; however, it is straightforward to calculate that \mbox{$\rho_{3^{\inp}}(\tau_{1^{\out}} = \ket{0}; \sigma_{2^{\out}}) := \mathsf{C}_{3^{\inp} 1^{\out}}(\sigma_{2^{\out}}) \star \ketbra{0}{0}_{1^{\out}} = \sigma_{3^{\inp}}$}, whereas $\rho_{3^{\inp}}(\tau_{1^{\out}}^\prime = \ket{1}; \sigma_{2^{\out}}) := \mathsf{C}_{3^{\inp} 1^{\out}}(\sigma_{2^{\out}}) \star \ketbra{1}{1}_{1^{\inp}} = \mathtt{NOT}(\sigma)_{3^{\inp}}$, evidencing signalling.

\end{proof}

\section{Convexity of \texorpdfstring{$\mathtt{CM}$}{}}\label{app::convexity}
\begin{thm}
    The set $\mathtt{CM}$ is convex.
\end{thm}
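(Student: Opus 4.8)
The plan is to establish convexity by showing that $\mathtt{CM}$ is closed under binary convex combinations: given $\mathsf{C}^{(0)}_{N:1}, \mathsf{C}^{(1)}_{N:1} \in \mathtt{CM}$ and $\lambda \in [0,1]$, I would show that $\lambda \mathsf{C}^{(0)}_{N:1} + (1-\lambda)\mathsf{C}^{(1)}_{N:1} \in \mathtt{CM}$, which is exactly convexity. The cleanest route is to work directly with the explicit Choi form of Eq.~\eqref{eq::classicalmemoryquantumprocess-2} rather than the system-environment definition, since App.~\ref{app::classicalmemoryquantumprocesses} already establishes their equivalence. The guiding intuition is that one flips a biased coin (with bias $\lambda$) at the outset, stores its outcome as an extra classical flag, feeds this flag forward through the (classical) memory at every timestep, and runs process $0$ or process $1$ conditioned on it; since the flag is purely classical, only classical memory is required.

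Concretely, I would write $\mathsf{C}^{(b)}_{N:1} = \sum_{x_{N:1}} \mathsf{L}^{(b),(x_N|x_{N-1:1})}_{N^{\inp}:N-1^{\out}} \otimes \cdots \otimes \mathsf{L}^{(b),(x_2|x_1)}_{2^{\inp}:1^{\out}} \otimes \rho^{(b),(x_1)}_{1^{\inp}}$ for $b \in \{0,1\}$, with each family of elements satisfying the state-ensemble and conditional-instrument constraints. I then introduce an augmented first-step outcome $\tilde{x}_1 := (b, x_1)$ and define a new state ensemble by $\tilde{\rho}^{(\tilde{x}_1)}_{1^{\inp}} := \lambda \rho^{(0),(x_1)}_{1^{\inp}}$ when $b=0$ and $\tilde{\rho}^{(\tilde{x}_1)}_{1^{\inp}} := (1-\lambda)\rho^{(1),(x_1)}_{1^{\inp}}$ when $b=1$. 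Because the flag $b$ is contained in the recorded history $\tilde{x}_{j:1}$, every later conditional instrument can be taken to be the one inherited from process $b$, namely $\mathsf{L}^{(b),(x_{j+1}|x_{j:1})}_{j+1^{\inp}:j^{\out}}$. Assembling these according to Eq.~\eqref{eq::classicalmemoryquantumprocess-2} and using that the flag is never altered, the cross-branch terms do not arise and the sum collapses to $\lambda \mathsf{C}^{(0)}_{N:1} + (1-\lambda)\mathsf{C}^{(1)}_{N:1}$.

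It then remains to verify that this construction genuinely satisfies the defining constraints of $\mathtt{CM}$. The augmented initial ensemble is a valid state ensemble: each element is positive, and the total trace is $\lambda \tr{\rho^{(0)}_{1^{\inp}}} + (1-\lambda)\tr{\rho^{(1)}_{1^{\inp}}} = 1$. Crucially, the mixing weights $\lambda, 1-\lambda$ are absorbed entirely into the first-step ensemble, so the conditional instruments and the final CPTP channel are taken unchanged from the constituent processes and thus automatically retain their trace-preservation properties for each value of the conditioning history. I expect the main subtlety to lie precisely in this threading of the flag through the conditioning structure: one must confirm that labelling the branch at $t_1$ and conditioning all subsequent instruments on the full history leaves every instrument trace-preserving for each fixed history, which is exactly where putting the weights into the initial ensemble (rather than into the instruments) pays off.

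I would also remark, as a complementary physical check, that the same statement can be read off the dilation picture of Def.~\ref{def::classicalmemoryquantumprocess} by enlarging the environment with a classical flag register $F$, preparing $\lambda \rho^{(0)}_{(ES)_{1^{\inp}}} \otimes \ketbra{0}{0}_F + (1-\lambda)\rho^{(1)}_{(ES)_{1^{\inp}}} \otimes \ketbra{1}{1}_F$, and replacing each unitary and EBC by its flag-controlled version. The only nontrivial point along this route is verifying that a flag-controlled entanglement-breaking channel that additionally measures and reprepares the flag remains entanglement-breaking; this follows since its measure-and-prepare data $\{M^{(b,x)}_E \otimes \ketbra{b}{b}_F\}$ forms a valid POVM paired with product prepared states. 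Either route establishes the claim, but I expect the Choi-form argument to be the more economical one to write out in full.
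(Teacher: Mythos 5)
Your proof is correct and is essentially the paper's own argument: both absorb the mixing weights $\lambda,1-\lambda$ into the first-step state ensemble and extend the classical label space so that the branch identity is carried through the conditioning history, leaving all later conditional instruments unchanged (and hence trace-preserving for each fixed history). Your explicit flag $\tilde{x}_1=(b,x_1)$ is just a cleaner rendering of the paper's union of index sets $\mathcal{S}:=\{x^A_{N:1}\}\cup\{x^B_{N:1}\}$, which implicitly assumes the two label sets are disjoint.
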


\begin{proof}
One way to prove that $\mathtt{CM}$ is convex is to notice the that the characterisation of classical memory processes presented in Eq.~\eqref{eq::classicalmemoryquantumprocess-2} is a convex combination. More precisely, as proved in App.~\ref{app::classicalmemoryquantumprocesses},  Eq.~\eqref{eq::classicalmemoryquantumprocess-2} states that an $N$-time quantum process belongs to $\mathtt{CM}$ if and only if
\begin{align}
    \mathsf{C}_{N:1}^{\textup{CM}}\!=\!  \sum_{x_{N:1}} \mathsf{L}_{N^{\inp}:N-1^{\out}}^{(x_N|x_{N-1:1})} \otimes \hdots \otimes \mathsf{L}_{2^{\inp}:1^{\out}}^{(x_2|x_1)} \otimes \rho_{1^{\inp}}^{(x_1)},
\end{align}
where $x_{k:j} := \{ x_j,\hdots, x_k\}$, $\{ \rho_{1^{\inp}}^{(x_1)} \}$ forms a state ensemble, i.e., each $\rho_{1^{\inp}}^{(x_1)} \geq 0$ and $\rho_{1^{\inp}} := \sum_{x_1} \rho_{1^{\inp}}^{(x_1)}$ has unit trace, and $\mathsf{L}_{j+1^{\inp}:j^{\out}}^{(x_{j+1}|x_{j:1})} := \mathsf{M}_{E_{j+1^{\inp}}}^{(x_{j+1})} \star \mathsf{U}_{(ES)_{j+1^{\inp} j^{\out}}} \star \sigma_{E_{j^{\out}}}^{(x_{j:1})}$ forms an instrument for each conditioning argument. 

To make the statement more explicit, we now show that for every probability $q\in [0,1]$ and every classical memory processes $ \mathsf{C}_{N:1}^{A}, \mathsf{C}_{N:1}^{B}\in \mathtt{CM}$, we have
\begin{align}
   \mathsf{C}_{N:1}:= q \mathsf{C}_{N:1}^{A} + (1-q)\mathsf{C}_{N:1}^{B} \in \mathtt{CM}.
\end{align}
By Eq.~\eqref{eq::classicalmemoryquantumprocess-2}, we can write $\mathsf{C}_{N:1}^{A}$ and $\mathsf{C}_{N:1}^{B}$ as 
\begin{align}
    \mathsf{C}_{N:1}^{A}\!=\! & \sum_{x^A_{N:1}} \mathsf{L}_{N^{\inp}:N-1^{\out}}^{(x^A_N|x^A_{N-1:1})} \otimes \hdots \otimes \mathsf{L}_{2^{\inp}:1^{\out}}^{(x^A_2|x^A_1)} \otimes \rho_{1^{\inp}}^{(x^A_1)} \notag \\
    \mathsf{C}_{N:1}^{B}\!=\! & \sum_{x^B_{N:1}} \mathsf{L}_{N^{\inp}:N-1^{\out}}^{(x^B_N|x^B_{N-1:1})} \otimes \hdots \otimes \mathsf{L}_{2^{\inp}:1^{\out}}^{(x^B_2|x^B_1)} \otimes \rho_{1^{\inp}}^{(x^B_1)}.
\end{align}
We now define the unnormalised states $\hat{\rho}_{1^{\inp}}^{(x^A_1)}:=q \rho_{1^{\inp}}^{(x^A_1)}$ and $\hat{\rho}_{1^{\inp}}^{(x^B_1)}:=(1-q)\rho_{1^{\inp}}^{(x^B_1)}$, the unnormalised processes
\begin{align}
    \widehat{\mathsf{C}}_{N:1}^{A}\!=\! & \sum_{x^A_{N:1}} \mathsf{L}_{N^{\inp}:N-1^{\out}}^{(x^A_N|x^A_{N-1:1})} \otimes \hdots \otimes \mathsf{L}_{2^{\inp}:1^{\out}}^{(x^A_2|x^A_1)} \otimes \hat{\rho}_{1^{\inp}}^{(x^A_1)} \notag \\
    \widehat{\mathsf{C}}_{N:1}^{B}\!=\! & \sum_{x^B_{N:1}} \mathsf{L}_{N^{\inp}:N-1^{\out}}^{(x^B_N|x^B_{N-1:1})} \otimes \hdots \otimes \mathsf{L}_{2^{\inp}:1^{\out}}^{(x^B_2|x^B_1)} \otimes \hat{\rho}_{1^{\inp}}^{(x^B_1)},
\end{align}
and a set which contains all indices from $\mathsf{C}_{N:1}^{A}$ and $\mathsf{C}_{N:1}^{B}$, $\mathcal{S}:=\{x^A_{N:1}\}\cup\{x^B_{N:1}\} $. This allows us to write
\begin{align}
    \mathsf{C}_{N:1} &= q \mathsf{C}_{N:1}^{A} + (1-q)\mathsf{C}_{N:1}^{B} \notag \\
    &= \widehat{\mathsf{C}}_{N:1}^{A} + \widehat{\mathsf{C}}_{N:1}^{B} \notag \\
    &= \sum_{x_{N:1}\in \mathcal{S}} \mathsf{L}_{N^{\inp}:N-1^{\out}}^{(x_N|x_{N-1:1})} \otimes \hdots \otimes \mathsf{L}_{2^{\inp}:1^{\out}}^{(x_2|x_1)} \otimes \rho_{1^{\inp}}^{(x_1)} 
\end{align}
which is a classical memory quantum process.
\end{proof}

\end{document}